\newcommand{\shuf}{\mathtt{shuf}}
\newcommand{\sfsshuf}{\mathtt{sfs}\text{-}\mathtt{shuf}}
\newcommand{\gsfsshuf}{\mathtt{gsfs}\text{-}\mathtt{shuf}}
\newcommand{\MC}{\mathtt{MC}}
\title{Incentives for Early Arrival in Cost Sharing}
\author{Junyu Zhang}
\affiliation{
  \institution{Key Laboratory of Intelligent Perception and Human-Machine Collaboration, ShanghaiTech University}
  \city{Shanghai}
  \country{China}}
\email{zhangjy22022@shanghaitech.edu.cn}
\author{Yao Zhang}
\affiliation{
  \institution{Key Laboratory of Intelligent Perception and Human-Machine Collaboration, ShanghaiTech University}
  \city{Shanghai}
  \country{China}}
\email{zhangyao1@shanghaitech.edu.cn}
\author{Yaoxin Ge}
\affiliation{
  \institution{Key Laboratory of Intelligent Perception and Human-Machine Collaboration, ShanghaiTech University}
  \city{Shanghai}
  \country{China}}
\email{geyx@shanghaitech.edu.cn}
\author{Dengji Zhao$^*$}
\affiliation{
  \institution{Key Laboratory of Intelligent Perception and Human-Machine Collaboration, ShanghaiTech University}
  \city{Shanghai}
  \country{China}}
\email{zhaodj@shanghaitech.edu.cn}
\author{Hu Fu}
\affiliation{
  \institution{Key Laboratory of Interdisciplinary Research of Computation and Economics, Shanghai University of Finance and Economics}
  \city{Shanghai}
  \country{China}}
\email{fuhu@mail.shufe.edu.cn}
\author{Zhihao Gavin Tang}
\affiliation{
  \institution{Key Laboratory of Interdisciplinary Research of Computation and Economics, Shanghai University of Finance and Economics}
  \city{Shanghai}
  \country{China}}
\email{tang.zhihao@mail.shufe.edu.cn}
\author{Pinyan Lu}
\affiliation{
  \institution{Key Laboratory of Interdisciplinary Research of Computation and Economics, Shanghai University of Finance and Economics}
  \city{Shanghai}
  \country{China}}
\email{lu.pinyan@mail.shufe.edu.cn}
\begin{abstract}
%In many applications of cooperative games, values created or costs incurred by an existing coalition must be shared among the members within it, and be further adjusted when new players join.  
In cooperative games, we study how values created or costs incurred by a coalition are shared among the members within it, and the players may join the coalition in a online manner such as investors invest a startup.
Recently, Ge \textit{et al.}~\cite{ge2024incentives} proposed a new property called incentives for early arrival (I4EA) in such games, which says that the online allocation of values or costs should incentivize agents to join early in order to prevent mutual strategic waiting. Ideally, the allocation should also be fair, so that agents arriving in an order uniformly at random should expect to get/pay their Shapley values. Ge~\textit{et al.}~\cite{ge2024incentives} showed that not all monotone value functions admit such mechanisms in online value sharing games.

In this work, we show a sharp contrast in online \emph{cost} sharing games.  We construct a mechanism with all the properties mentioned above, for every monotone cost function.  To achieve this, we first solve 0-1 valued cost sharing games with a novel mechanism called \emph{Shapley-fair shuffle cost sharing mechanism (SFS-CS)}, and then extend SFS-CS to a family called \emph{generalized Shapley-fair shuffle cost sharing mechanisms (GSFS-CS)}.
The critical technique we invented here is a mapping from one arrival order to another order so that we can directly apply marginal cost allocation on the shuffled orders to satisfy the properties.   
Finally, we solve general valued cost functions, by decomposing them into 0-1 valued functions in an online fashion.
\end{abstract}
\keywords{Cost Sharing, Early Arrival, Online Mechanisms}
\newcommand{\BibTeX}{\rm B\kern-.05em{\sc i\kern-.025em b}\kern-.08em\TeX}
\begin{document}

%%% The following commands remove the headers in your paper. For final 
%%% papers, these will be inserted during the pagination process.

\pagestyle{fancy}
\fancyhead{}

%%% The next command prints the information defined in the preamble.

\maketitle 
\renewcommand{\thefootnote}{*}
\footnotetext{$^*$Corresponding author}
%%%%%%%%%%%%%%%%%%%%%%%%%%%%%%%%%%%%%%%%%%%%%%%%%%%%%%%%%%%%%%%%%%%%%%%%

\section{Introduction}

In a classical cost sharing game, a fixed group of players receive a service as a coalition, and the cost of the service is divided among the players. 
This game models many real-world applications such as electricity or water supply networks~\cite{bergantinos2010minimum,gomez2011merge,trudeau2017set}. 
They have been well studied and many solutions have been proposed to address different requirements~\cite{kar2002axiomatization,bergantinos2007optimistic,trudeau2012new}; e.g.\@ players should have incentives to stay together to share the cost~\cite{trudeau2020clique}, players' shares should be fair~\cite{shapley1953value}, or they should change consistently as members join or leave the coalition~\cite{immorlica2008limitations,thomson1995population}. 

Most traditional cost sharing games distribute the cost offline, i.e., the mechanism knows all the players and their cost function in advance. 
% This is largely the case, for example, when an electricity supply network is built in a suburb. 
However, in many applications, players typically do not all arrive at the same time; rather, they join sequentially. 
Upon the arrival of a new player, an irrevocable decision has to be made on the division of the current cost, without knowledge about players that arrive in the future. 

% For example, in a facility sharing game where a number of towns consider sharing the cost of a public facility such as a water distribution system, roads or communication networks, new towns may join after some initial coalition.
% The facility may need to be expanded to accommodate each new comer, so the total cost is a combinatorial function of the set of towns in the coalition, and at every point in time should be shared by the towns in the coalition.
% Typically, when deciding to join the coalition, a town needs to weigh its budget against the cost it is to share in the coalition.  
% It is important and natural to require that, once a town is in the coalition, its share should not increase in the future as more towns join.
% We term this condition as \emph{online individual rationality}.
% If a town can decrease its shared cost by unilaterally delaying its arrival, then the towns may strategically wait for each other, and end up in a deadlock.
For example, a shirt factory produces shirts for different customers with different brands and their orders do not arrive at the same time. However, the unit cost per shirt may decrease when more shirts are produced at one period (considering the cost of assembling a new product line). On the other hand, when new orders arrive, the factory cannot just wait forever to get more orders to start. Therefore, the existing customers have to bear the production costs.
If so, the unit cost of early arrival orders might be higher than the late arrival orders. This will incentivize customers to strategically wait each other.
From the factory point of view, we need to incentivize customers to place orders as soon as they need, so that the factory can fully utilize its capacity.
%To do so, the unit cost for early arrival orders should be decreasing when more orders arrive, which is what we are going to design in the work.

% To achieve this, it may involve the cost transfer between different customers, but it makes no sense to increase one’s payment after she has already provided the order.
%Hence, the factory always wants to gather more orders as soon as possible. However, it also suggests that customers may delay the order to catch a lower price. If a customer can decrease her payment by unilaterally delaying her order, then the customers may strategically wait for each other. Consequently, the garment factory cannot carry out production tasks. Recently, Ge \textit{et al.}~\cite{ge2024incentives} introduced the notion of \emph{incentives for early arrivals} (I4EA) precisely to prevent this from happening. 

Except for the \emph{incentives for early arrivals} (I4EA), the solution also needs to satisfy some other properties.
The cost share to a player should be non-increasing when more players are joining (\emph{online individual rationality} (OIR)). One intuition for OIR is that a player made his decision according to the cost allocation on his arrival, if the cost is going to increase with uncertainty in the future, the player may not even want to join the coalition in the first place. Finally, the cost share should be considered fair, otherwise, allocating all the costs to the last comer trivially satisfies I4EA and OIR.
As a measure of fairness, we require that, if the players arrive in a uniformly random order, each player's expected cost share is his Shapley value (\emph{Shapley fairness} (SF)).

When the cost function is supermodular, letting each newly joined player pay her marginal cost satisfies the three properties --- OIR, I4EA, and SF; but for more general cost functions, it is not clear that a mechanism exists with all three properties.
In fact, for \emph{value} sharing games, Ge \textit{et al.}~\cite{ge2024incentives} showed that not all value functions admit such online mechanisms.

In sharp contrast, in this work we show that in online cost sharing games, every monotone cost function admits an online mechanism that is I4EA, OIR, and SF.

Specifically, we propose a new mechanism called the \emph{Shapley-fair shuffle cost sharing mechanism (SFS-CS)} to solve \textbf{all} 0-1 cost sharing games. SFS-CS is further extended to a class called the \emph{generalized Shapley-fair shuffle cost sharing mechanism (GSFS-CS)}. The key technique used in GSFS-CS is shuffling the original arriving order into a new order which essentially helps us decide who should pay the cost. The shuffling is the key to guaranteeing Shapley fairness, which is also quite involved.

In summary, our main contributions advance the state of the art as follows:
\begin{itemize}
    %\item We model the setting of online cost sharing game and proposed the property of I4EA in this setting.
    \item For 0-1 valued monotone online cost sharing games, we propose a mechanism called \emph{Shapley-fair shuffle cost sharing mechanism (SFS-CS)} to satisfy OIR, I4EA, and SF. 
    \item Extending SFS-CS, we propose a class of mechanisms (GSFS-CS), which gives more flexibility on allocating the cost.
    \item For general online cost sharing games, we propose decomposing a general cost function into 0-1 valued cost functions to utilize GSFS-CS to satisfy the same properties. 
\end{itemize}

The remainder of the paper is organized as follows. Section \ref{related work} introduces the related work and Section \ref{model} defines the model and the desirable properties. We then propose SFS-CS to solve all the 0-1 monotone cost sharing games in Section \ref{sec:0-1} and extend it to a general class (GSFS-CS) in Section \ref{sec:gsfs}.
Finally, we extend the solution to general monotone cost sharing games in Section \ref{sec:general}. %Finally, we discuss future investigations.

\subsection{Other Related Work}
\label{related work}
Many studies have already considered online cost sharing.
For example, online multicast cost sharing, where players
arrive one by one and each connects to the root by greedily choosing a path minimizing its cost, was studied in~\cite{DBLP:conf/spaa/CharikarKMNS08}. It focuses on the price of anarchy of the Nash equilibrium. Besides, cost sharing games with private valuations in an online setting was studied in~\cite{DBLP:conf/ciac/BrennerS10} and they give a perfect characterization of both weakly
group-strategyproof and group-strategyproof online cost sharing mechanisms for this model. Furthermore, some studies considered the application of online cost sharing for demand-responsive transport systems and horizontal supply chains~\cite{DBLP:journals/tits/FuruhataDKODBCW15,zou2021online}. None of them has considered the incentives for early arrival. %for cost sharing.

There are many online mechanism design problems on other topics. For example, the online coalition formation game is studied in~\cite{FlamminiMichele2021On,DBLP:conf/esa/BullingerR23}, where the players with preference for different coalitions arrive online. The main objective is to effectively allocate players who join asynchronously into groups, with the overarching goal of maximizing the collective social welfare. Besides,~\cite{pavan2014dynamic,2017Dynamic,parkesonline} studied the auction mechanism design in dynamic settings, where players with private valuations of items will arrive or change over time. Additionally, mechanism design with diffusion incentives is also a new trend~\cite{2016Diffusion,2020Incentives,DBLP:journals/ai/LiHGZ22,DBLP:conf/atal/Zhao21}. This area of research focuses on incentivizing people to invite their neighbors in a social network to participate in an auction or a collaboration. We consider a different setting for cost sharing, where the players can decide when to arrive. To improve time efficiency, the goal is to guarantee that the players are benefited from early arrival.

Many studies also considered the cost sharing problem of minimizing the cost in the network~\cite{kar2002axiomatization,bergantinos2007fair,trudeau2012new}. They consider the minimum spanning tree problem and aim to allocate the cost of the spanning tree among the players. Furthermore, cost sharing under private valuation and connection control was studied in~\cite{DBLP:conf/atal/ZhangZGZ23}, where players have private valuation about the service and the connection control of the edge. They all consider the offline setting while we focus on the online cost sharing.

%%%%%%%%%%%%%%%%%%%%%%%%%%%%%%%%%%%%%%%%%%%%%%%%%%%%%%%%%%%%%%%%%%%%%%%%

\section{The Model}

\label{model}
An online cost sharing game is given by a triple $(N,c, \pi)$, where $N$ is a set of players, $c:2^N\rightarrow \mathbb{R}_{\geq 0}$ is a cost function, and $\pi \in \Pi(N)$ is a permutation of~$N$ ($\Pi(N)$ denotes the set of all permutations of~$N$).
Players arrive sequentially, in the order given by~$\pi$. For a coalition $S\subseteq N$, $c(S)$ is the cost caused by $S$. 
$c$ is \emph{normalized} if  $c(\emptyset) = 0$, and is \emph{monotone} if $\forall T\subseteq S\subseteq N$,  $c(S)\geq c(T)$. 
Throughout this work, we consider normalized and monotone cost functions. The following gives some formal notations and definitions. %the properties we need to satisfy.
\begin{itemize}
    \item Given an order $\pi$, we say $j \prec_\pi i$ if player $j$ arrives strictly earlier than player $i$. All these players and $i$ herself form a set $p(i,\pi)$, %denotes the set of players that arrive (weakly) earlier than $i$, including $i$,
    i.e., $p(i,\pi):= \{j \mid  j\prec_{\pi} i \}\cup\{i\}$.
    \item For a subset $S \subseteq N$, \emph{$c$ restricted to~$S$}, written as $c_{|S}$, is the set function $c_{|S}:2^S\rightarrow \mathbb{R}_{\geq 0}$ defined by $c_{|S}(T)=c(T), \forall T \subseteq S$; 
    \emph{$\pi$ restricted to $S$}, written as $\pi_{|S}$, is the permutation of $S$ defined by $i \prec_{\pi_{|S}} j$ if{f} $i \prec_\pi j$, for all $i, j \in S$.
    \item A subset $S \subseteq N$ is a \emph{prefix} of~$\pi$ if $S$ is the set of first $|S|$ players to arrive according to~$\pi$. We denote this as $S \sqsubseteq \pi$. 
    \item For an online cost sharing game $(N, c, \pi)$ and a prefix $S \sqsubseteq \pi$, the \emph{local cost sharing game on $S$} is the game $(S, c_{|S}, \pi_{|S})$.
\end{itemize}

\begin{definition}[Marginal Cost]
    Given a cost function $c$, player $i$'s \textbf{marginal cost (MC)} to a coalition $S\ni i$ is $$\mathtt{MC}(i,c,S) := c(S)-c(S\setminus \{i\}).$$
\end{definition}

\begin{definition}[Shapley Value] Given a cost function $c$, player $i$'s \textbf{Shapley value} is
    $$\mathtt{SV}_i(c) := \frac{1}{|N|!}\sum_{\pi \in \Pi(N)}\mathtt{MC}(i,c,p(i,\pi)) \:. $$
\end{definition}

For a monotone cost function, the marginal cost of any player in any coalition is non-negative. Hence, the Shapley value is also non-negative. 

For a cost sharing game to be online, we would like that, at any point of time, when the set of players that have arrived is~$S$, the cost caused by~$S$ should be allocated irrevocably among the players in $S$, and this allocation should be conducted using only information on $c$ and $\pi$ restricted to~$S$. 
The next definitions formalize this.

\begin{definition}% [Policies and Mechanisms]
\label{def:mechanism}
A \textbf{cost sharing policy} $\phi$ maps a cost sharing game $(N, c, \pi)$ to an $|N|$-tuple of allocations, so that $\phi_i(N, c, \pi) \geq 0$ is player~$i$'s cost share, and $\sum_i \phi_i(N, c, \pi) = c(N)$.

% \hufu{Not altogether happy with this definition.}
An \textbf{online cost sharing mechanism} 
 is given by a cost sharing policy $\phi$ for each stage of the game, so that after the arrival of each prefix $S \sqsubseteq \pi$, each player $i \in S$ gets a share of $\phi_i(S, c_{|S}, \pi_{|S})$.
\end{definition}

When more players join, we expect that the additional costs caused by them should not be born by the players arriving before them. 
In other words, we require each player's cost share to weakly decrease as more players arrive:
% Otherwise, they may quit when being asked to pay. We call this property online individual rationality. 

\begin{definition}
\label{def:OIR}
    An online cost sharing mechanism $\phi$ is \textbf{online individually rational} (OIR) for cost function~$c$ if, for any arrival order $\pi$ and any $T, S \sqsubseteq \pi$ with $T \subseteq S$, we have $\phi_i(T, c_{|T}, \pi_{|T}) \geq \phi_i(S, c_{|S}, \pi_{|S})$ for every player $i \in T$.
    
\end{definition}

% In order to incentivize players to join early, we need to maker sure that if a player chooses to arrive later than her actual arrival when the other players’ order of arrivals remains the same, her cost share should weakly increases. Formally, 
To prevent players from strategically waiting, we would like that a player's cost share should weakly increase if she chooses to unilaterally delay her arrival.
Formally,

 \begin{definition}
     \label{def: I4EA}
      An online cost sharing mechanism~$\phi$ is \textbf{incentivizing for early arrival} (I4EA) 
      % on a game $v$ 
      if for any player $i$, $\phi_i(N, c, \pi_1)\leq \phi_i(N, c, \pi_2)$ for all $\pi_1$ and $\pi_2$ such that $\pi_{1|N \setminus \{i\}} = \pi_{2|N \setminus \{i\}}$
      and $p(i,\pi_1) \subset p(i,\pi_2)$.
 \end{definition}

\begin{proposition}
    An online cost sharing mechanism~$\phi$ is I4EA if and only if for any player $i$, $\phi_i(N, c, \pi_1)\leq \phi_i(N, c, \pi_2)$ for any two arrival orders $\pi_1$ and $\pi_2$ with only player $i$ being delayed one position, %two adjacent players $i$ and $j$ flipped,
    i.e., $\pi_1=[\dots,i,j,\dots]$ and $\pi_2=[\dots,j,i,\dots]$.
\label{Pro:I4EA}
\end{proposition}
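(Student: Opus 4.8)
The plan is to prove the two directions separately, with the ``only if'' direction being an immediate specialization of Definition~\ref{def: I4EA} and the ``if'' direction carrying the real content through a telescoping argument over adjacent swaps.

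For the ``only if'' direction, I would observe that delaying player $i$ by one position is simply a special case of the general hypothesis in Definition~\ref{def: I4EA}. Concretely, if $\pi_1=[\dots,i,j,\dots]$ and $\pi_2=[\dots,j,i,\dots]$ differ only by swapping the adjacent pair $(i,j)$, then the relative order of every pair of players in $N\setminus\{i\}$ is untouched, so $\pi_{1|N\setminus\{i\}}=\pi_{2|N\setminus\{i\}}$; moreover $p(i,\pi_2)=p(i,\pi_1)\cup\{j\}$ with $j\notin p(i,\pi_1)$, hence $p(i,\pi_1)\subset p(i,\pi_2)$. The I4EA inequality applies verbatim and yields $\phi_i(N,c,\pi_1)\leq\phi_i(N,c,\pi_2)$.

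For the ``if'' direction, suppose $\pi_1,\pi_2$ satisfy $\pi_{1|N\setminus\{i\}}=\pi_{2|N\setminus\{i\}}$ and $p(i,\pi_1)\subset p(i,\pi_2)$. Since the two orders agree on the relative order of all players other than $i$, they differ \emph{only} in the position of $i$, and the strict inclusion of prefixes tells us $i$ sits strictly later in $\pi_2$; let $k=|p(i,\pi_2)|-|p(i,\pi_1)|\geq 1$ be the number of positions by which $i$ is delayed. I would then build a chain $\pi_1=\sigma_0,\sigma_1,\dots,\sigma_k=\pi_2$ in which $\sigma_{t+1}$ is obtained from $\sigma_t$ by swapping $i$ with the player immediately following it. Each such elementary move is exactly a one-position delay of $i$, so the single-swap hypothesis gives $\phi_i(N,c,\sigma_t)\leq\phi_i(N,c,\sigma_{t+1})$ for every $t$; chaining these inequalities by transitivity produces $\phi_i(N,c,\pi_1)\leq\phi_i(N,c,\pi_2)$.

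The step that needs care, and the main (though mild) obstacle, is verifying that this chain is well defined and lands precisely at $\pi_2$. I must check that each elementary swap preserves the relative order of $N\setminus\{i\}$ --- it does, because swapping $i$ past a single neighbor changes only $i$'s position relative to that one neighbor and never the order between two players in $N\setminus\{i\}$ --- so the invariant $\sigma_{t|N\setminus\{i\}}=\pi_{1|N\setminus\{i\}}$ is maintained throughout. Then, after moving $i$ forward past exactly the $k$ players in $p(i,\pi_2)\setminus p(i,\pi_1)$, player $i$ occupies its $\pi_2$-position while every other player is already correctly placed by the invariant, so indeed $\sigma_k=\pi_2$. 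Once this bookkeeping is confirmed, the remaining argument is a routine chaining of the hypothesized inequalities.
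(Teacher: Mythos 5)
Your proof is correct. The paper states Proposition~\ref{Pro:I4EA} without proof, treating it as immediate; your argument---specializing Definition~\ref{def: I4EA} for the ``only if'' direction and telescoping a multi-position delay into adjacent swaps for the ``if'' direction, with the check that each swap preserves $\pi_{|N\setminus\{i\}}$ and that the chain terminates at $\pi_2$---is exactly the standard justification the authors leave implicit.
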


We would like an online cost sharing mechanism to be fair.  
To this end, we require a player’s expected cost share to be equal to her Shapley
value if the arrival order is uniformly at random.

\begin{definition}
     An online cost sharing mechanism~$\phi$ is \textbf{Shapley-fair} (SF) for a cost function~$c$ if 
     for each player $i\in N$, 
     \begin{align*}
     \frac 1 {|N|!} \sum_{\pi \in \Pi(N)} \phi_i(N, c, \pi)= \mathtt{SV}_i(c).
     \end{align*}
\end{definition}

In this work, we aim to design online cost sharing mechanisms that are OIR, I4EA, and SF.

\section{0-1 Valued Cost Functions}
\label{sec:0-1}

In this section, we study cost functions that take values only 0 or 1. Section~\ref{sec:shuffle_based} introduces the definition of the shuffle rule and the corresponding shuffle-based cost sharing mechanism. We then propose our our mechanism in Section~\ref{SFS-CS} and show its desirable properties in Section~\ref{sec:proofs}.

\subsection{Shuffle-Based Cost Sharing Mechanisms}
\label{sec:shuffle_based}

For 0-1 valued cost functions, for any arrival order~$\pi$, there is at most one player whose arrival makes the current coalition's cost jump from $0$ to~$1$.
We call this player the \emph{marginal player}.

\begin{definition}
         Given a $0$-$1$ valued cost function $c$ and an order $\pi$, player $i\in \pi$ is called the \textbf{marginal player} if $ \mathtt{MC}(i,c,p(i,\pi))=1$.
\end{definition}

As a first attempt, we consider a cost sharing game where $N=\{A,B,C\}$ and $c(S) = 1$ if and only if $A\in S$ or $\{B,C\}\subseteq S$. A possible allocation that satisfies I4EA, OIR, and SF is illustrated in Figure~\ref{fig:intui-ex}. For each order, we first shuffle the order and apply marginal cost allocation in the shuffled order. For example, we shuffle $[A,B,C]$ to $[B,C,A]$, in which $C$ is the marginal player and bears the cost.

\begin{figure}[htb]
    \centering
    \includegraphics[width=1.0\linewidth]{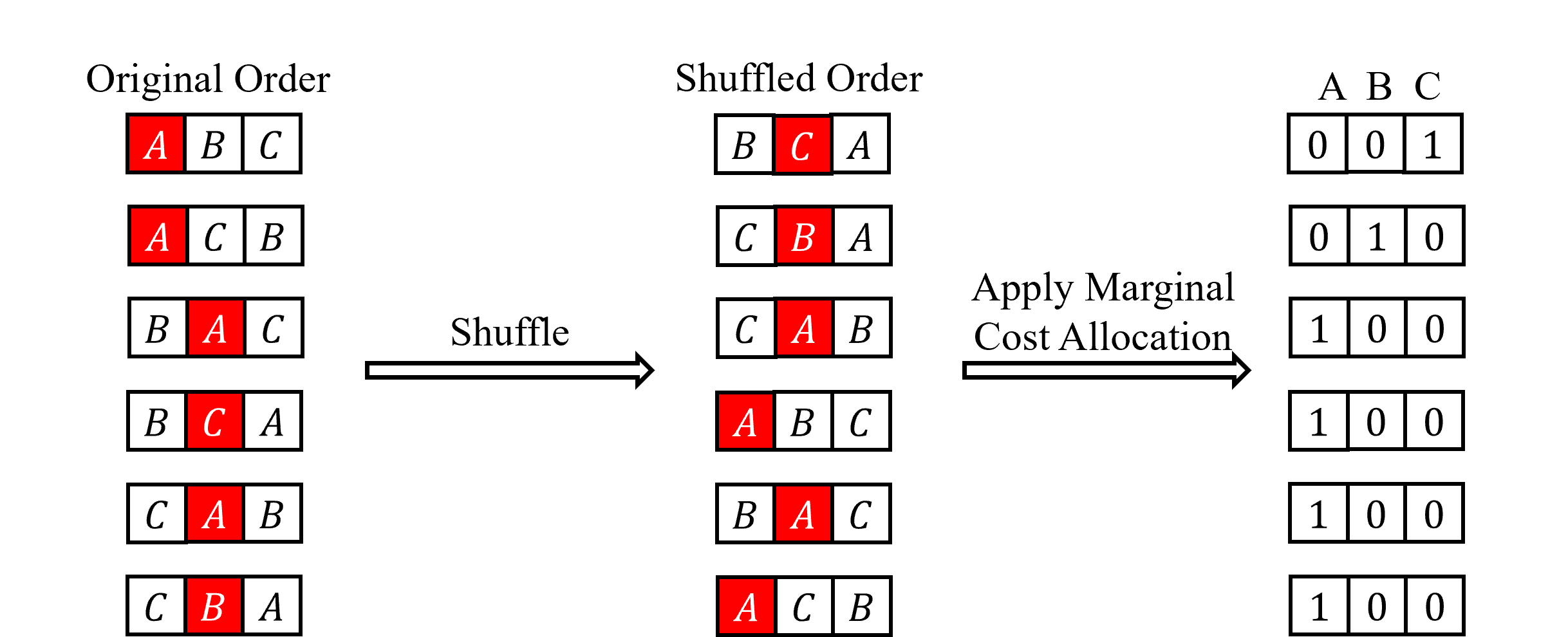}
    \caption{The player in red is the marginal player.}
    \label{fig:intui-ex}
\end{figure}

Adding a layer of sophistication to this idea, one may extend it to a family of mechanisms that retain its Shapley fairness: given a permutation~$\pi$, if we use a \emph{bijection} to map $\pi$ to another permutation~$\pi'$, and let each player in~$\pi$ pay her marginal cost in~$\pi'$, the resulting mechanism must be SF. The idea is natural enough, but for it to be useful, one has to be able to calculate the bijection in an online fashion; on top of that, the bijection must satisfy other properties for the resulting mechanism to be OIR and I4EA.
Our main result is that, somewhat surprisingly, a desirable mechanism based on such a bijection exists and can be computed efficiently.
We first define bijections usable in such mechanisms:

% To incentivize early arrivals, the cost needs to be transferred to the player who arrives later. For example, when a new player joins, if she is the new marginal player in the sequence without the current marginal player, the cost will be transferred to her, as illustrated in Figure~\ref{eg:I4EA-not-sf}. Unfortunately, such a method is verified to be not SF\footnote{Due to the limitation of the space, we propose the counter example in Figure~\ref{eg:I4EA-not-sf} without showing detail}.  An interesting observation of above mechanism is that the allocation in an order can correspond to the marginal cost of another order. Recall that SF requires every player’s expected cost share over all possible joining orders to be exactly her Shapley value. If we can improve above mechanism so that the result in each order corresponds to a different order, then SF can be satisfied.

% \begin{figure}[htb]
 %   \centering
    % \includegraphics[scale=0.5]{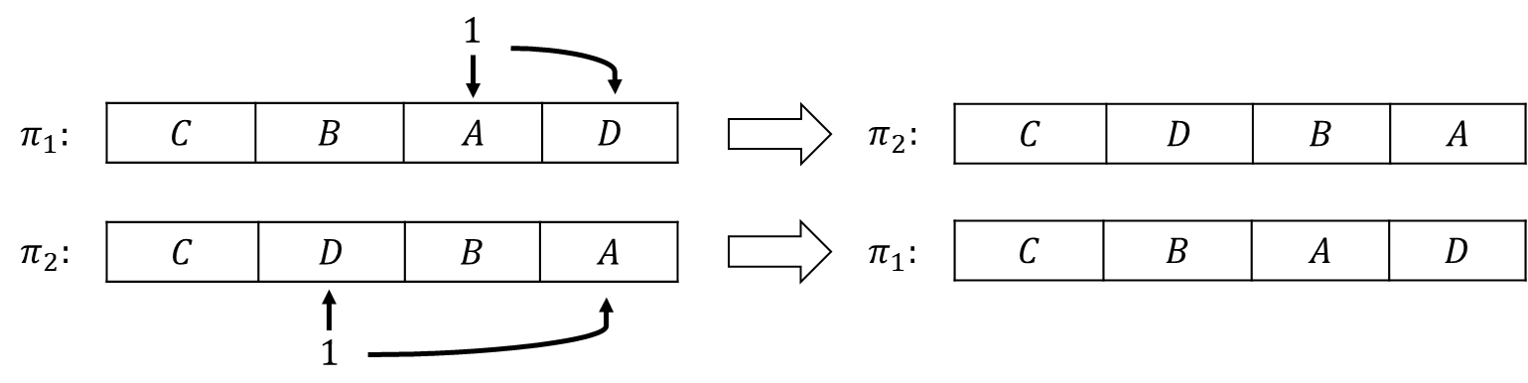}
    % \caption{$N=\{A,B,C,D\}$. For any $T$, we have $c(T)=1$ if and only if $\exists S\subseteq T, S\in \{(A,B,C),(C,D)\}$. Consider the arriving orders $\pi_1=[C,B,A,D], \pi_2=[C,D,B,A]$. For $\pi_1$, the cost is transferred from $A$ to $D$. For $\pi_2$, the cost is transferred from $D$ to $A$.}
    % \label{eg:I4EA-not-sf}
% \end{figure}

% Hence, a natural idea is to shuffle any original order $\pi$ to a image ordering $\pi'$  and  the cost share of each player is just her marginal cost in $\pi'$. We give the description formally.

\begin{definition}
          Given a player set $N$ and cost function $c$, a \textbf{shuffle rule} is a function $\mathtt{shuf}: \bigcup_{S \subseteq N}\Pi(S)  \rightarrow \bigcup_{S \subseteq N} \Pi(S)$, satisfying 
          \begin{enumerate}[(a)]
           \item for each $S \subseteq N$, $\shuf$ restricted to $\Pi(S)$ is a bijection from $\Pi(S)$ to $\Pi(S)$, and this bijection depends only on $c_{|S}$; and
          \item for any $T \subseteq N$, $\pi \in \Pi(T)$,  and $S \sqsubseteq \pi$, $\shuf(\pi_{|S}) = (\shuf(\pi))_{|S}$.
          In algebraic language, $\shuf(\cdot)$ and projection to any prefix commutes.
          \end{enumerate}
          
          % for any $S\subseteq N$, which only depends on $c_{|S}$, and maps a permutation $\pi\in \Pi(S)$ to another permutation~$\pi'\in \Pi(S)$; moreover, for any $T\sqsubseteq \pi$, $\mathtt{shuf}(\pi_{|T}) = \pi'_{|T}$.
\end{definition}
\begin{framed}

\noindent Given this definition of shuffle rule, the following is a well defined online mechanism:

\noindent A \textbf{shuffle-based cost sharing mechanism}~$\phi$
 is given by a shuffle rule $\mathtt{shuf}$, so that after the arrival of each prefix $S \sqsubseteq \pi$, each player $i \in S$ gets a share $\phi_i(S, c_{|S}, \pi_{|S})=\mathtt{MC}(i,c_{|S},p(i,\mathtt{shuf}(\pi_{|S})))$.
\end{framed}

Note that since $\shuf$ is a bijection on each $\Pi(S)$ for $S \subseteq N$, $\shuf$ is a bijection on $\bigcup_{S \subseteq N} \Pi(S)$.
Its inverse $\shuf^{-1}$ exists with $\shuf^{-1}(\shuf(\pi)) = \pi$ for any $\pi \in \bigcup_{S\subseteq N} \Pi(S)$.

%  Given a coalition $S$ and $\forall \pi_1, \pi_2 \in \Pi(S)$ such that $\pi_1 \neq \pi_2$, a shuffle rule $\mathtt{shuf}$ is \emph{injective} if $\mathtt{shuf}(\pi_1)\neq \mathtt{shuf}(\pi_2)$. $\mathtt{shuf}$ is \emph{surjective} if there exists  $\pi$ such that $\mathtt{shuf}(\pi)= \pi'$, $\forall \pi' \in \Pi(S)$. A shuffle rule $\mathtt{shuf}$ is \emph{bijective} if $\mathtt{shuf}$ is injective and surjective. Given an bijective shuffle rule $\mathtt{shuf}$, its inverse shuffle rule $\mathtt{shuf}^{-1}$ exists, with $\mathtt{shuf}^{-1}(\mathtt{shuf}(\pi))=\pi$, $\forall \pi \in \Pi(S)$.

\begin{proposition}
\label{pro:SF}
    A shuffle-based cost sharing mechanism 
 is SF.
 % if the shuffle rule is bijective.
\end{proposition}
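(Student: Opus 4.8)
The plan is to derive Shapley fairness directly from the bijectivity of the shuffle rule, so that the whole argument collapses to a change of summation variable. First I would unfold the mechanism on the full player set: when all of $N$ has arrived under order $\pi$, player $i$'s share is $\phi_i(N,c,\pi) = \mathtt{MC}(i, c, p(i, \shuf(\pi)))$, since here $S = N$ and $\pi_{|N} = \pi$. Hence the left-hand side of the SF condition is $\tfrac{1}{|N|!}\sum_{\pi \in \Pi(N)} \mathtt{MC}(i, c, p(i, \shuf(\pi)))$.

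Next I would invoke clause (a) of the shuffle-rule definition with $S = N$: $\shuf$ restricted to $\Pi(N)$ is a bijection of $\Pi(N)$ onto itself. This licenses the substitution $\sigma = \shuf(\pi)$, under which $\pi$ and $\sigma$ each range over every element of $\Pi(N)$ exactly once. The sum is therefore unchanged, $\sum_{\pi \in \Pi(N)} \mathtt{MC}(i, c, p(i, \shuf(\pi))) = \sum_{\sigma \in \Pi(N)} \mathtt{MC}(i, c, p(i, \sigma))$, and dividing by $|N|!$ recovers exactly the definition of $\mathtt{SV}_i(c)$.

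The only ingredient doing genuine work is the bijectivity guaranteed by clause (a); there is no real obstacle, since after the relabeling the two summations are literally identical as multisets. I would nonetheless flag two points of care. The cost function inside the marginal-cost term stays fixed throughout: the shuffle permutes only the arrival order, not the coalition $N$ whose cost is being divided, so the summand is a genuine function of the permutation alone and the substitution $\sigma = \shuf(\pi)$ is well defined and valid. Moreover, the prefix-commutativity clause (b) is not used here at all, which is the right state of affairs: Shapley fairness constrains only the top-level behavior of the mechanism on the full set $N$, whereas clause (b) is precisely what will be needed later to make the same mechanism OIR and I4EA.
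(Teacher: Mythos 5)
Your proof is correct and follows exactly the same route as the paper's: unfold the mechanism's definition so that player $i$'s share under $\pi$ is the marginal cost in $\shuf(\pi)$, then use the bijectivity of $\shuf$ on $\Pi(N)$ to re-index the sum and recover the Shapley value. Your additional remarks (that clause (b) is not needed here and that the cost function is held fixed) are accurate but not essential.
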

\begin{proof} Write $\pi'=\mathtt{shuf}(\pi)$. Since $\mathtt{shuf}$ is bijective, we have
\begin{align*}
         \frac 1 {|N|!} \sum_{\pi \in \Pi(N)} \phi_i(N, c, \pi) = \frac 1 {|N|!} \sum_{\pi' \in \Pi(N)} \mathtt{MC}(i,c,p(i,\pi')) \\ =\frac 1 {|N|!} \sum_{\pi \in \Pi(N)} \mathtt{MC}(i,c,p(i,\pi)) =\mathtt{SV}_i(c).
\end{align*}

\end{proof}

% \begin{definition}[Consistent] For a 0-1 valued cost function~$c$, a shuffle rule $\mathtt{shuf}$ is \textbf{consistent} if for any $S \subseteq N$ and any $\pi \in \Pi(S)$, if player~$i$ is the marginal player in $\shuf(\pi)$, then $i$ is also the marginal player in $\shuf(\pi)_{|p(i,\pi)}$.
 % , where $\pi'=\mathtt{shuf}(\pi)$.
% \end{definition}

\begin{definition}[Group Size Monotone]
 For a 0-1 valued cost function~$c$, a shuffle rule $\mathtt{shuf}$ is \textbf{group size monotone} if for any $S \subseteq N$, any $\pi \in \Pi(S)$ with player $i$ being the marginal player in $\shuf(\pi)$, it is the case that for any $T\sqsubseteq \pi$ with $i\in T$, $i$ is also the marginal player in $\shuf(\pi_{|T})$.
 % , where $\pi'=\mathtt{shuf}(\pi)$.
\end{definition}

\begin{proposition}
\label{pro:cons}
    A shuffle-based cost sharing mechanism is OIR if its shuffle rule $\mathtt{shuf}$ is group size monotone.
\end{proposition}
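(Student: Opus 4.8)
The plan is to exploit the $0$-$1$ structure of $c$ to reduce OIR to a single monotonicity implication, and then read that implication directly off the group size monotone property. First I would observe that, because $c$ is $0$-$1$ valued, for any prefix $S \sqsubseteq \pi$ and any $i \in S$ the share $\phi_i(S, c_{|S}, \pi_{|S})=\mathtt{MC}(i,c_{|S},p(i,\shuf(\pi_{|S})))$ lies in $\{0,1\}$, and it equals $1$ exactly when $i$ is the marginal player in the shuffled order $\shuf(\pi_{|S})$. Consequently, to establish OIR for two prefixes $T \subseteq S$ of an order $\pi$, it suffices to prove the single implication: if $i \in T$ has $\phi_i(S, c_{|S}, \pi_{|S})=1$, then $\phi_i(T, c_{|T}, \pi_{|T})=1$ as well. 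The complementary case $\phi_i(S, c_{|S}, \pi_{|S})=0$ is immediate, since shares are non-negative and hence $\phi_i(T, c_{|T}, \pi_{|T}) \geq 0 = \phi_i(S, c_{|S}, \pi_{|S})$.

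Next I would set up the correct instance of the group size monotone property. The permutation to feed into it is $\pi_{|S} \in \Pi(S)$, whose shuffle $\shuf(\pi_{|S})$ has $i$ as its marginal player by the assumption of the nontrivial case. Two elementary facts about restriction and prefixes are needed here: that $T$ is a prefix of $\pi_{|S}$, and that $(\pi_{|S})_{|T} = \pi_{|T}$. For the first, since $T$ and $S$ are both prefixes of $\pi$ with $T \subseteq S$, the first $|T|$ players of $\pi_{|S}$ are exactly the first $|T|$ players of $\pi$, namely the elements of $T$, so $T \sqsubseteq \pi_{|S}$. For the second, transitivity of restriction gives $(\pi_{|S})_{|T} = \pi_{|T}$. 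I would verify both by unwinding the definitions of $\sqsubseteq$ and of $\pi_{|\cdot}$.

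With these in hand, group size monotone applied to $S$, the permutation $\pi_{|S}$, its marginal player $i$, and the prefix $T \sqsubseteq \pi_{|S}$ with $i \in T$ yields that $i$ is the marginal player in $\shuf((\pi_{|S})_{|T}) = \shuf(\pi_{|T})$. Translating back through the correspondence established in the first paragraph, this says precisely that $\phi_i(T, c_{|T}, \pi_{|T}) = 1$, which completes the desired implication and therefore OIR.

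The main obstacle is not any hard computation but the bookkeeping in the second paragraph: one must check carefully that $T$ genuinely qualifies as a prefix of $\pi_{|S}$ and that the iterated restriction collapses to $\pi_{|T}$, because group size monotone is phrased in terms of a single permutation and its own prefixes rather than in terms of two nested prefixes of an ambient order $\pi$. Once the instance is correctly identified, the property is essentially tailored to deliver OIR, and no further work is required.
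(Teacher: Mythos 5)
Your proof is correct and is essentially the paper's argument: the paper states the same implication in contrapositive form (if $i$ is not the marginal player in $\shuf(\pi_{|T})$, she never becomes marginal in $\shuf(\pi_{|S})$ for $S \supseteq T$), while you apply group size monotonicity in the direct direction after reducing OIR to the $0$-$1$ share dichotomy. The extra bookkeeping you do (that $T \sqsubseteq \pi_{|S}$ and $(\pi_{|S})_{|T} = \pi_{|T}$) is sound and merely makes explicit what the paper leaves implicit.
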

\begin{proof} 
We only need to show that, for any arrival order $\pi \in \Pi(N)$ and any player~$i$, if $i$ is not the marginal player in the image ordering at a certain point, she never becomes the marginal player in the image ordering as more players join.
% Let $\pi'=\mathtt{shuf}(\pi)$.  
Suppose for prefix $T \sqsubseteq \pi$, $i$ is not the marginal player in~$\shuf(\pi_{|T})$.
Then as more players join, when the set of players is $S \supseteq T$, $i$ cannot be the marginal player in $\shuf(\pi_{|S})$ either, by the group size monotonicity of~$\shuf$.

 % For any player $i$, if she is not the marginal player in $\pi'_{|T}$ for $T\sqsubseteq \pi$ with $i\in T$, she is not the marginal player in $\pi'_{|T'}$ for any $T'\sqsubseteq \pi$ with $T\subseteq T'$, i.e., the cost share of player $i$ weakly decreases when more players join. Therefore, the mechanism is OIR.
\end{proof}

\begin{definition}[Flip Monotone]
For a 0-1 valued cost function, a shuffle rule $\mathtt{shuf}$ is \textbf{flip monotone} if for any two arrival orders with two adjacent players flipped, i.e., 
$\pi_1=[\dots,j,i,\dots]$ and $\pi_2=[\dots,i,j,\dots]$,
if player $i$ is the marginal player in $\mathtt{shuf}(\pi_2)$, then $i$ is the marginal player in $\mathtt{shuf}(\pi_1)$ as well.
\end{definition}

By Proposition~\ref{Pro:I4EA}, the following is immediate.
\begin{proposition}\label{prop:flip}
    A shuffle-based cost sharing mechanism 
 is I4EA if its shuffle rule $\mathtt{shuf}$ is flip monotone.
\end{proposition}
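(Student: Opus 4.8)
The plan is to derive the statement directly from the characterization of I4EA in Proposition~\ref{Pro:I4EA} together with the binary structure of $0$-$1$ valued cost functions. First I would record the simple but crucial observation that, for such a cost function, the share $\phi_i(N, c, \pi) = \mathtt{MC}(i, c, p(i, \mathtt{shuf}(\pi)))$ equals exactly $1$ when $i$ is the marginal player in $\mathtt{shuf}(\pi)$ and equals $0$ otherwise. Thus every share lies in $\{0,1\}$, and it is $1$ precisely when $i$ is the marginal player in the shuffled order.

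By Proposition~\ref{Pro:I4EA} it suffices to check $\phi_i(N, c, \sigma_1) \leq \phi_i(N, c, \sigma_2)$ for any pair $\sigma_1 = [\dots, i, j, \dots]$ and $\sigma_2 = [\dots, j, i, \dots]$ that differ only by delaying $i$ one position. Since both shares take values in $\{0,1\}$, the inequality can fail in only one way: $\phi_i(N, c, \sigma_1) = 1$ while $\phi_i(N, c, \sigma_2) = 0$, i.e.\@ $i$ is the marginal player in $\mathtt{shuf}(\sigma_1)$ but not in $\mathtt{shuf}(\sigma_2)$. The whole proof reduces to ruling out this single case.

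The key step is to apply flip monotonicity after aligning the naming conventions. In the definition of flip monotone, the order in which $i$ arrives earlier plays the role of $\pi_2$, and the order in which $i$ is delayed plays the role of $\pi_1$; so I would match $\sigma_1$ (where $i$ precedes $j$) to the definition's $\pi_2$ and $\sigma_2$ (where $j$ precedes $i$) to the definition's $\pi_1$. Flip monotonicity then asserts: if $i$ is the marginal player in $\mathtt{shuf}(\sigma_1)$, then $i$ is the marginal player in $\mathtt{shuf}(\sigma_2)$ as well. This is exactly the negation of the bad case, so whenever $\phi_i(N, c, \sigma_1) = 1$ we also have $\phi_i(N, c, \sigma_2) = 1$, and the desired inequality holds.

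I do not expect a genuine computational obstacle here; the claim is essentially immediate once the characterization and the $\{0,1\}$ structure are combined, which is why the excerpt calls it immediate. The one subtlety I would be careful about is precisely the bookkeeping of the swapped roles of $\pi_1$ and $\pi_2$ between the flip-monotone definition and the I4EA characterization: a careless identification would reverse the implication and seem to prove the opposite inequality. I would therefore state the correspondence explicitly before invoking flip monotonicity.
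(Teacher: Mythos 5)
Your proof is correct and takes the same route as the paper, which simply declares the claim immediate from Proposition~\ref{Pro:I4EA}; you have merely filled in the (genuinely immediate) details, including the one point worth being careful about, namely that the roles of $\pi_1$ and $\pi_2$ are swapped between the flip-monotone definition and the I4EA characterization.
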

% \begin{proof} For the shuffle-based cost sharing mechanism, $\phi_i(N,c,\pi_1)=\mathtt{MC}(i,c,p(i, \mathtt{shuf}(\pi_1)))$ and $\phi_i(N,c,\pi_2)=\mathtt{MC}(i,c,p(i,\mathtt{shuf}(\pi_2)))$. If player $i$ is the marginal player in $\mathtt{shuf}(\pi_1)$, $i$ is the marginal player in $\mathtt{shuf}(\pi_2)$. Hence, $\phi_i(N,c,\pi_1)=\mathtt{MC}(i,c,p(i,\mathtt{shuf}(\pi_1)))\leq \mathtt{MC}(i,c,p(i,\mathtt{shuf}(\pi_2)))=\phi_i(N,c,\pi_2)$, i.e., the mechanism is I4EA.
% \end{proof}

\begin{theorem}
     A shuffle-based cost sharing mechanism 
 is SF, OIR, and I4EA if its shuffle rule $\mathtt{shuf}$ is group size monotone and flip monotone.
\end{theorem}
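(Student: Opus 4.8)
The plan is to observe that this theorem is nothing more than the conjunction of the three properties already established separately in the preceding propositions, so the proof reduces to invoking each of them under the appropriate hypothesis. No genuinely new argument is required beyond assembling the pieces, since all the work has been front-loaded into Propositions~\ref{pro:SF}, \ref{pro:cons}, and \ref{prop:flip}.

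First I would dispatch Shapley fairness, which needs no assumption whatsoever on the shuffle rule: Proposition~\ref{pro:SF} already shows that \emph{every} shuffle-based cost sharing mechanism is SF, using only that $\shuf$ restricts to a bijection on each $\Pi(S)$ so that summing marginal costs over the shuffled orders is a mere re-indexing of the sum defining the Shapley value. Hence SF holds unconditionally. Next, for online individual rationality, I would invoke the hypothesis that $\shuf$ is group size monotone; by Proposition~\ref{pro:cons} this is precisely the sufficient condition for OIR, so the claim follows directly. Finally, for I4EA, I would use the hypothesis that $\shuf$ is flip monotone and appeal to Proposition~\ref{prop:flip} (itself reduced, via the adjacent-transposition characterization of Proposition~\ref{Pro:I4EA}, to checking only single flips of adjacent players).

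Since each component has already been proven, there is no real obstacle here; the theorem is a direct corollary. The only subtlety worth flagging is that the three conclusions rest on logically distinct hypotheses---SF is free, OIR needs group size monotonicity, and I4EA needs flip monotonicity---so in writing the proof one should cite the matching proposition for each property and take care not to conflate the two monotonicity conditions. The substantive difficulty of the overall program lies not in this assembly step but in the later task of actually \emph{constructing} a shuffle rule that is simultaneously group size monotone and flip monotone, which is what the subsequent SFS-CS mechanism will accomplish.
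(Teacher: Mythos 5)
Your proposal is correct and matches the paper exactly: the theorem is stated as an immediate corollary of Propositions~\ref{pro:SF}, \ref{pro:cons}, and \ref{prop:flip}, with SF holding unconditionally for any shuffle-based mechanism, OIR following from group size monotonicity, and I4EA from flip monotonicity. The paper gives no separate proof for this theorem precisely because it is this assembly step, and your flagging of which hypothesis feeds which conclusion is the right level of care.
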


In Section~\ref{SFS-CS}, we construct a group size monotone and flip monotone shuffle rule $\mathtt{sfs}\text{-}\mathtt{shuf}$ and the corresponding \textit{Shapley-fair shuffle cost sharing mechanism} (SFS-CS).

\subsection{Shapley-fair Shuffle Cost Sharing Mechanism}
\label{SFS-CS}

We first make some observations and then illustrate some attempts at a desirable shuffle rule, before giving our final construction.

Given an arrival order~$\pi$, we refer to its image under $\shuf$ as the image ordering.  
When a new player~$i$ joins an existing coalition~$S$, the relative orderings of the players in~$S$ in the image ordering must remain unchanged by definition of shuffle rule; therefore, in the image ordering $\shuf(\pi_{|S \cup \{i\}})$, $i$ must be inserted into $\shuf(\pi_{|S})$.
This suggests that a construction of any shuffle rule can go by stages; at the arrival of each new player, one only needs to decide where to insert~$i$ in the image ordering.
We therefore maintain and grow an image ordering $\pi'$ as players join; when there is no danger of confusion, we treat $\pi'$ as a variable, and use it to denote $\pi'_{|S}$.  

Intuitively, both group size monotonicity and flip monotonicity suggest that if a late comer $i$ in~$\pi$ can be made a marginal player in the image ordering, $i$ should be made so.  
We will indeed follow this intuition in our construction; in fact, among all the positions in the image ordering that make $i$ the marginal player, we choose the earliest such position.  
The following example shows, however, that when a late comer cannot be made a marginal player in the image ordering, one needs to be very careful choosing her position, because the bijectivity of $\shuf$ may be at stake.

% In an online cost sharing game, players will arrive according to an order $\pi$. Initially, when the first player comes, a shuffle-based cost sharing mechanism can only maps the current order to itself. Let $\pi'$ be the image ordering updated by the corresponding shuffle rule after each player arrives. Before proposing our method, we first illustrate a possible direct idea to construct such a process. It gives an intuition of our method though it will not work.

\begin{example}
\label{ex:ABC}
    Consider the 0-1 valued cost sharing game with $N=\{A,B,C\}$, and $c(S) = 1$ if and only if $S \supseteq \{A, B\}$.
     For the arrival order $\pi_1=[A,B,C]$, when $A$ arrives, $\pi'_1=[A]$; then $B$ arrives; since $c(\{A,B\})=1$, to satisfy OIR, $\pi'_1$ must be $[A,B]$. 
     % Otherwise, $A$ will be the marginal player. 
     When $C$ arrives, she is not a marginal player in~$\pi'_1$ no matter where we insert her.  
     Does it make a difference where we insert her?
     % it seems that $C$ can be inserted at any position such as the start or the end of $\pi'_1$. Here
     \begin{itemize}
         \item If we insert $C$ to the start of $\pi'_1$, we get $\pi'_1=[C,A,B]$ (see Figure~\ref{fig:naive shuf eg}). 
         Consider another two orders, $\pi_2 = [A,C,B]$ and $\pi_3 = [C,A,B]$. 
         Since $\shuf$ is a bijection from $\Pi(\{A, C\})$ to itself, after the second player joins, one of the image orderings of $\pi_2'$ and $\pi_3'$ is $[C, A]$, and the other is $[A, C]$.
         $B$ is the last player in both $\pi_2$ and~$\pi_3$, and by group size monotonicity must be made the marginal player in $\pi_2'$ and~$\pi_3'$ when she joins.  
         The image ordering $[C, A]$ must now become $[C, A, B]$, contradicting $\shuf$ being a bijection.
         Therefore $\pi_1$ cannot be mapped to $[C, A, B]$.
         
         % We can observe that three different orders $\pi_1$, $\pi_2$ and $\pi_3$ can only have two different image orderings, i.e., it cannot be a bijection.  %For the order $\pi_2=[A,C,B]$, following the idea above, when $C$ arrives, if we we get $\pi'_2=[C,A,B]=\pi'_1$, i.e., it is not a bijection.
         \item By a similar argument, $C$ cannot be inserted at the end of $\pi_1'$.
         % If we insert $C$ at the end of $\pi'_1$, then similar to the above, it still cannot be a bijection.
     \end{itemize}
     % Hence, the only way to insert $C$ between $A$ and $B$ in $\pi_1'$.
     Therefore, $\pi'_1$ has to be $[A, C, B]$ when $C$ joins.
     
     % since it is more likely for the player in the future to be the marginal player of $\pi'$. Hence,
     
     % If we insert $C$ at the end of $\pi'_1$, 
     % % since it is more likely for the player in the future to be the marginal player of $\pi'$. Hence, 
     % we get $\pi'_1=[A,B,C]$. However, for the order $\pi_2=[A,C,B]$, following the idea above, we get $\pi'_2=[A,B,C]=\pi'_1$, i.e., it is not a bijection, either. 
    \label{eg:naive shuf}
\end{example}

\begin{figure}[htb]
    \centering
    \includegraphics[scale=0.5]{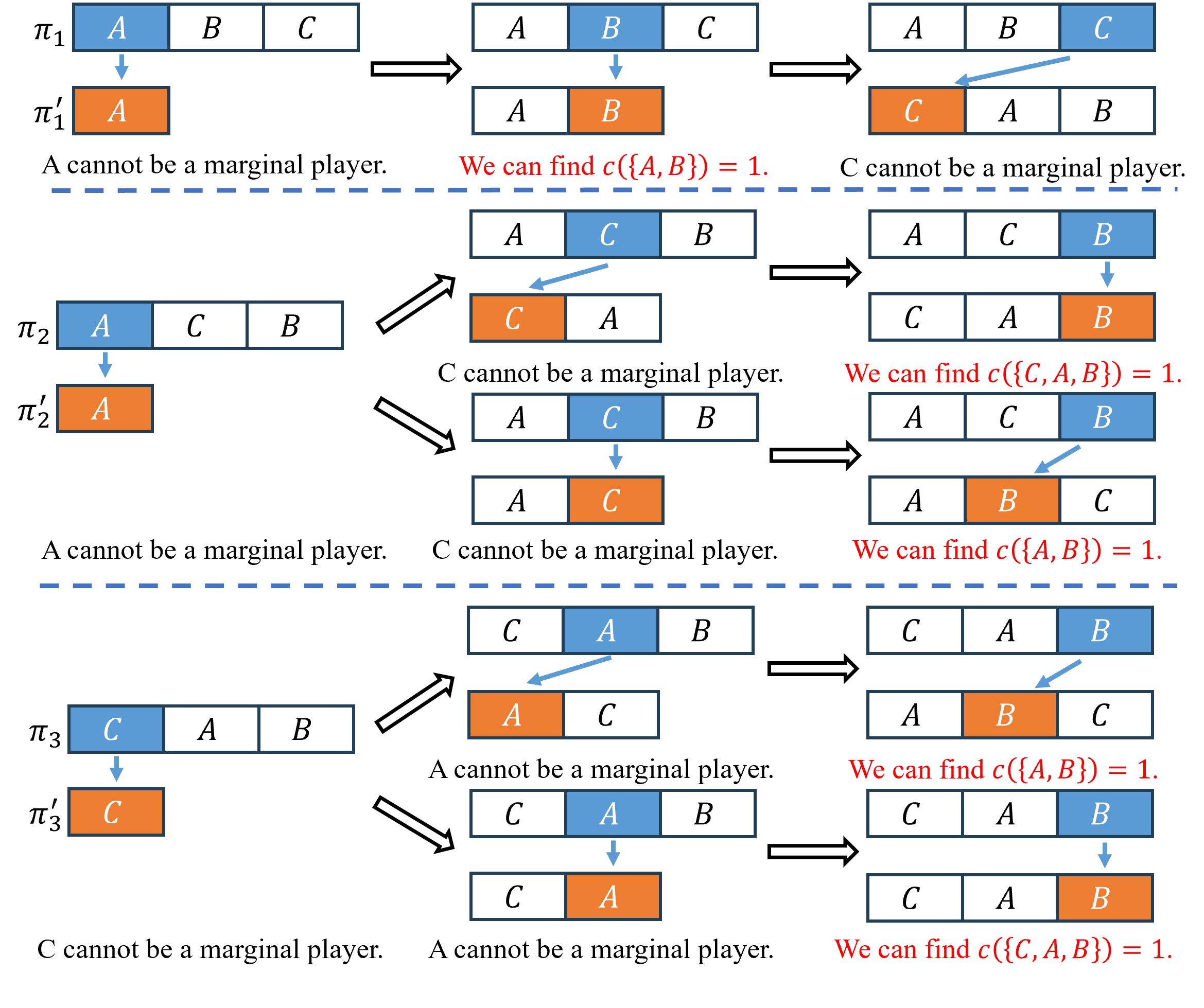}
    \caption{An attempt on constructing a $\mathtt{shuf}$ for cost sharing game mentioned in Example~\ref{eg:naive shuf}.}
    \label{fig:naive shuf eg}
    \Description{An attempt on constructing a shuffle rule.}
\end{figure}

It turns out that in Example~\ref{ex:ABC}, there is a shuffle rule mapping $\pi_1$ to $[A, C, B]$ that has all the desirable properties.
The construction of $\sfsshuf$ below shows that this is not a coincidence.
Specifically, when a newcomer~$i$ cannot be made the marginal player in the image ordering~$\pi'$, $\sfsshuf$ inserts $i$ in~$\pi'$ before \emph{$i$'s predecessor in~$\pi$}.  
(In Example~\ref{ex:ABC}, the predecessor of~$C$ in~$\pi$ is~$B$, hence $C$ is placed before~$B$ in~$\pi'$.)

We now formally construct $\sfsshuf$.
The iterative construction explicitly gives the image under $\sfsshuf$ of 
an order $\pi$ restricted to each prefix~$S \sqsubseteq \pi$. 
It should be clear that, in this mapping, $\sfsshuf$ uses only $\pi_{|S}$ and $c_{|S}$, and does produce an order on~$S$.
The fact that it is bijective is nontrivial, and is shown in Section~\ref{sec:proofs}.

\begin{framed}
% \begin{definition}
% \label{def:sfsshuf}
\noindent The shuffle rule $\sfsshuf$ maps any order $\pi$ to an image ordering~$\pi'$ given by the following iterative procedure:
\begin{itemize}
\item The image ordering $\pi'$ is initialized to be the first player in~$\pi$.
New players arriving according to~$\pi$ are iteratively inserted into~$\pi'$.
Let $i$ be the next player to arrive in~$\pi$.

\item \textbf{Case 1.} If $i$ can be inserted into~$\pi'$ so that $i$ becomes the marginal player in~$\pi'$, she is inserted into the earliest such position.
Formally, let $\mathcal P$ be the set $\{S \mid S \sqsubseteq \pi',  \MC(i, c, S \cup \{i\}) = 1\}$, then $\mathcal P \neq \emptyset$.  
Let $S^*$ be the member in~$\mathcal P$ with the smallest cardinality.
Update $\pi'$ so that $i$ is inserted after $S^*$.  
(Note that $S^*$ may be the empty set, in which case $i$ becomes the first player in~$\pi'$.)

\item \textbf{Case 2.} If there is no way to insert~$i$ into~$\pi'$ to make her the marginal player, update~$\pi'$ so that $i$ is inserted before her predecessor in~$\pi$.
(By predecessor we mean the player coming right before~$i$ in~$\pi$.)
\end{itemize}
% \end{definition}
\end{framed}

\begin{definition}
\label{def:SFS-CS}
The \emph{Shapley-fair shuffle cost sharing mechanism} (SFS-CS) is the shuffle-based cost sharing mechanism given by the shuffle rule~$\sfsshuf$.
\end{definition}

We show in Section~\ref{sec:proofs} the proofs of the main theorem.
\begin{theorem}
\label{The:SF-OIR-I4EA}
    For all 0-1 valued monotone cost sharing games, SFS-CS is SF, OIR, and I4EA.
\end{theorem}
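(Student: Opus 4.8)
The plan is to verify that $\sfsshuf$ satisfies exactly the hypotheses needed to invoke the machinery already assembled. Once we show that $\sfsshuf$ is a genuine shuffle rule that is both group size monotone and flip monotone, the theorem follows from Proposition~\ref{pro:SF} (for SF), Proposition~\ref{pro:cons} (for OIR), and Proposition~\ref{prop:flip} (for I4EA). So the work divides into three parts: (A) $\sfsshuf$ is a shuffle rule, i.e.\ it is well defined, commutes with restriction to prefixes, and restricts to a bijection on each $\Pi(S)$; (B) $\sfsshuf$ is group size monotone; and (C) $\sfsshuf$ is flip monotone.

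For part (A), well-definedness and the commutativity property~(b) are immediate from the incremental nature of the construction. Running the procedure on $\pi$ processes arrivals in the order of $\pi$ and inserts each new player without ever reordering the players already placed; hence after the first $|S|$ arrivals the current image ordering is exactly $\sfsshuf(\pi_{|S})$, and projecting the final image $\sfsshuf(\pi)$ back to $S$ recovers it, giving $(\sfsshuf(\pi))_{|S} = \sfsshuf(\pi_{|S})$. The substance of part~(A) is bijectivity, and since each $\Pi(S)$ is finite it suffices to prove injectivity. I would argue by induction on $|S|$, exhibiting an ``undo the last arrival'' inverse: given an image ordering $\pi'$ together with $c$, I describe how to identify the player inserted last in the construction and how to delete it to recover $\sfsshuf(\pi_{|S \setminus \{\text{last}\}})$, whence the induction hypothesis recovers $\pi$ uniquely. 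Identifying the last player requires reading off whether it was placed by Case~1 (in which case it is the unique marginal player of $\pi'$) or by Case~2 (in which case the ``insert before the predecessor'' rule pins down its position relative to its $\pi$-predecessor); Example~\ref{ex:ABC} shows precisely why any other Case~2 placement would collide two distinct orders onto one image. I expect this recovery argument to be the main obstacle, since it is where one must show the undo step is unambiguous.

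For part (B), the key local fact is that a Case~2 insertion never changes which player is marginal. Indeed, if the newcomer $j$ cannot be made marginal anywhere, then in particular $c(S' \cup \{j\}) = c(S')$ for every prefix $S' \sqsubseteq \pi'$, so inserting $j$ at any position does not create a cost jump at $j$ and leaves the first cost jump of the ordering --- i.e.\ the marginal player --- unchanged. Reading the construction backwards then gives group size monotonicity: if $i$ is the marginal player in $\sfsshuf(\pi)$ and $j \neq i$ is the last arrival, then $j$ cannot have triggered Case~1 (which would make $j$ the unique marginal player), so $j$ triggered Case~2 and $i$ was already marginal before $j$ was inserted; iterating down to any prefix $T \ni i$ shows $i$ is marginal in $\sfsshuf(\pi_{|T})$.

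For part (C), I would compare the two runs of the construction on $\pi_1 = [\dots,j,i,\dots]$ and $\pi_2 = [\dots,i,j,\dots]$, which agree on the common prefix $T$ preceding the swapped pair and re-agree on the common suffix, differing only in whether $i$ or $j$ is inserted first into $\sfsshuf(\pi_{|T})$. A case analysis on which of Case~1/Case~2 each of $i$ and $j$ triggers, tracking the values $c(T' \cup \{i\})$ and $c(T' \cup \{i,j\})$ at the relevant positions $T' \sqsubseteq \sfsshuf(\pi_{|T})$, shows that if inserting $i$ before $j$ (in $\pi_2$) makes $i$ marginal, then inserting $i$ after $j$ (in $\pi_1$) also makes $i$ marginal, which is exactly flip monotonicity. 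The one delicate point is that $j$'s earlier presence in $\pi_1$ can enlarge the set of positions at which $i$ qualifies as a Case~1 insertion, and one must check this never destroys $i$'s marginality. With (A)--(C) in place, $\sfsshuf$ is a group size monotone and flip monotone shuffle rule, so Propositions~\ref{pro:SF}, \ref{pro:cons}, and~\ref{prop:flip} yield SF, OIR, and I4EA.
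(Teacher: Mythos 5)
Your overall plan coincides with the paper's: show that $\sfsshuf$ is a bijective shuffle rule that is group size monotone and flip monotone, then invoke Propositions~\ref{pro:SF}, \ref{pro:cons}, and~\ref{prop:flip}. Part~(B) is complete and correct: the observation that a Case~2 insertion cannot change the marginal player (since $c(S'\cup\{j\})=c(S')$ for every prefix $S'$ whenever $j$ cannot be made marginal), combined with the fact that a Case~1 insertion makes the newcomer the unique marginal player, gives group size monotonicity cleanly --- arguably more cleanly than the paper's direct contrapositive argument for OIR.

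Parts~(A) and~(C), however, have genuine gaps. In~(A) you correctly locate the crux at the ``undo the last arrival'' step but do not supply it. A Case~2 player is inserted before her predecessor \emph{in $\pi$}, and that predecessor's position in $\pi'$ is not directly visible from $\pi'$ alone. The paper resolves this with the \emph{related player}: when the marginal player $i$ has $c(\{i\})=0$, the first $j\prec_{\pi'}i$ with $c(p(j,\pi')\cup\{i\})=1$ marks where $i$ was originally inserted; the players strictly between $j$ and $i$ in $\pi'$ are exactly those arriving after $i$ in $\pi$, in reverse order, so the player immediately following $j$ is the last arrival (and $i$ herself is last if nobody sits between them). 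Without this identification rule the inductive reconstruction does not go through, and saying the Case~2 rule ``pins down'' the position is not a proof.

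Part~(C) is the more serious gap. You frame flip monotonicity as a comparison of the two insertion moments of $i$ and $j$ and assert the two runs ``re-agree on the common suffix.'' They do not: the suffix players are inserted into two different image orderings and can land in different positions, and --- crucially --- a suffix player can usurp the marginal role from $i$ in one run but not in the other. The hard case is precisely when $i$ is marginal at her insertion in $\pi_1$ but some later arrival takes over in $\pi'_1$; one must then show $i$ cannot remain marginal in $\pi'_2$. The paper needs two structural lemmas for this: every player arriving after the marginal player in $\pi$ is placed before her in $\pi'$ (Lemma~\ref{lem: I4EA1}), and the prefix up to $i$'s related player in $\pi'_1$, minus $j$, is contained in the corresponding prefix in $\pi'_2$ (Lemma~\ref{lem: I4EA2}); together these yield $p(i,\pi'_1)\setminus\{i\}\subseteq p(i,\pi'_2)\setminus\{i\}$, so monotonicity of $c$ transfers the usurpation from $\pi'_1$ to $\pi'_2$. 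There is also a separate subcase ($i$ not marginal at insertion in $\pi_1$ while her predecessor $j$ is) requiring its own contradiction argument. The ``delicate point'' you flag --- that $j$'s earlier presence enlarges the set of Case~1 positions for $i$ --- is not where the difficulty actually lies.
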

% Obtaining the image ordering $\pi'=\mathtt{sfs}\text{-}\mathtt{shuf}(\pi)$, SFS-CS allocates the cost to the marginal player of $\pi'$. Here is a running example for SFS-CS.

Before the proof, we give an example of $\sfsshuf$.

\begin{example}
\label{eg:sfsshuf}
    Consider the 0-1 valued cost sharing game with $N=\{A,B,C,D,E,F,G\}$. For any $T$, $c(T)=1$ if and only if $\exists S\subseteq T, S\in \{\{A,C\},\{B,C\},\{B,D,E\},\{E,F\}\}$. For the arrival order $\pi=[A,B,C,D,E,F,G]$, when $A$ arrives, $\mathcal P=\emptyset$ and $\pi'=[A]$; then $B$ arrives and $\mathcal P=\emptyset$. 
    So $B$ is inserted before $A$ in $\pi'$, i.e., $\pi'=[B,A]$. When $C$ arrives, $\mathcal P=\{\{B\},\{B,A\}\}$ and $\pi'=[B,C,A]$. Then $D$ arrives and $\mathcal P=\emptyset$. So $D$ is inserted before $C$ in $\pi'$, i.e., $\pi'=[B,D,C,A]$. 
    The arrivals of $E, F$, and~$G$ are handled similarly.
    In the end, $E$ is the marginal player in the final image ordering, so her share is~$1$ and everyone else's is 0.
    % It is the similar for $E$ and $\pi'=[B,E,D,C,A]$. When $F$ arrives, $\mathcal P=\{\{B,E,D\}\}$ and $\pi'=[B,E,D,F,C,A]$. Finally $G$ arrives and she is inserted before $F$, i.e., $\pi'=[B,E,D,G,F,C,A]$. 
    % Therefore, $\phi_F(N,c,\pi)=1$ and otherwise $0$.
\end{example}
\begin{figure}[htbp]
\centering
    \includegraphics[scale=0.48]{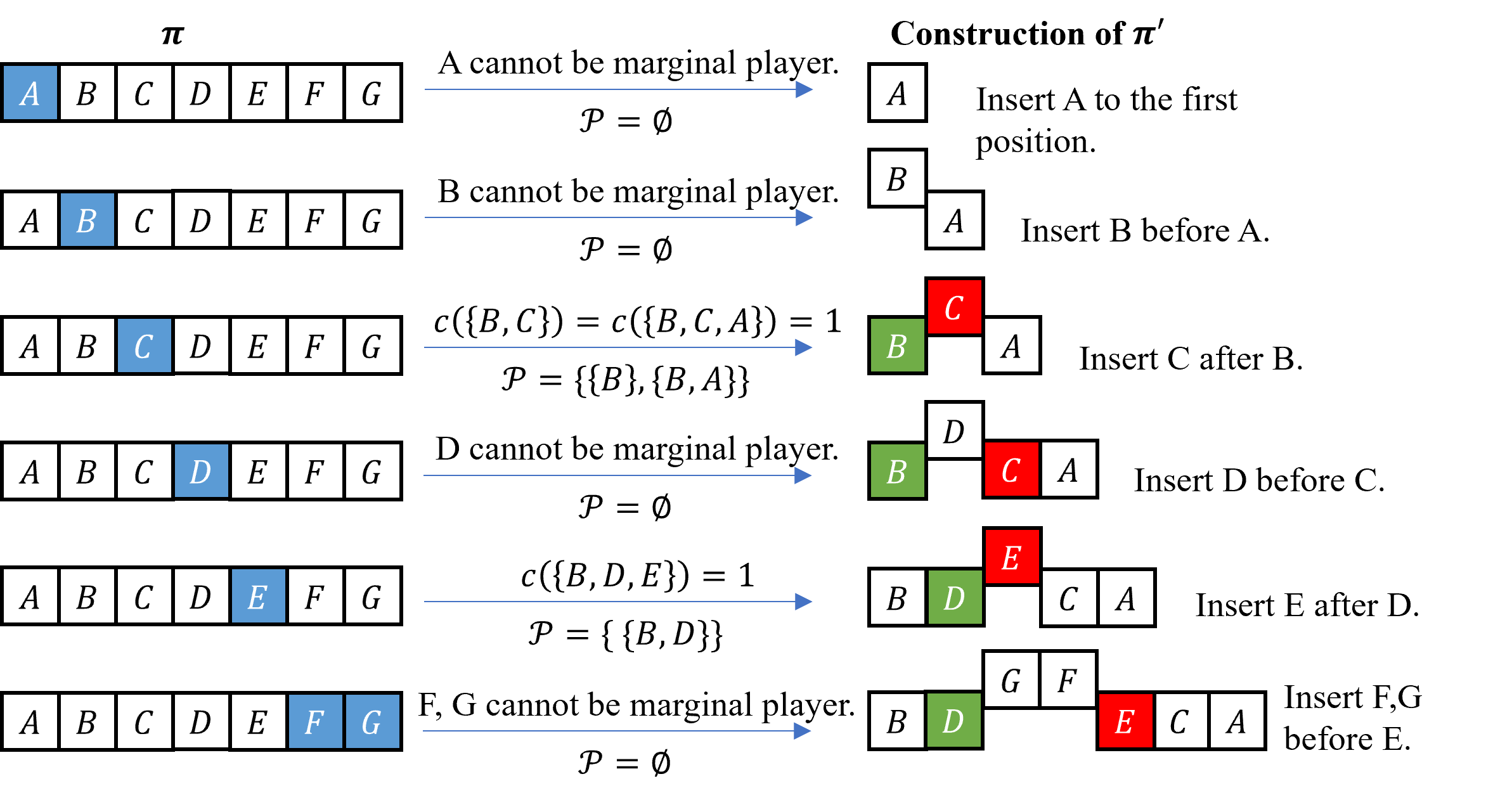}
    \caption{An example of $\sfsshuf$. The left side is the original order $\pi$ and for each joining player (colored by blue), the right side shows the construction process of the image ordering $\pi'$. Note that the players colored by red are the marginal players, and the players colored by green are the corresponding related players (see Definition~\ref{Def:related_player} in Section~\ref{sec:proofs}).}
    %{\color{red}}To be modified }
    \label{fig:SFS-CS-example}
    \Description{An example of sfsshuf.}
\end{figure}

% \begin{table}[]
% \caption{An example of $\sfsshuf$. The newly arriving player in each round is in red.}
% \centering
% \begin{tabular}{|c|c|c|c|}
% \hline
% $\pi$   & $\mathcal P$              & $\pi'$  & \text{ Marginal Player of $\pi'$ } \\ \hline
% A                & $\emptyset$                   & {\color{red}A}               & $\emptyset$                                \\ \hline
% AB               & $\emptyset$                   & {\color{red}B}A              & $\emptyset$                                \\ \hline
% ABC              & \text{ \{\{B\}\},\{B,A\}\} }  & B{\color{red}C}A             & C                                          \\ \hline
% ABCD             & $\emptyset$                   & B{\color{red}D}CA            & C                                          \\ \hline
% ABCDE            & $\emptyset$                   & B{\color{red}E}DCA           & C                                          \\ \hline
% ABCDEF           & \{\{B,E,D\}\}                & BED{\color{red}F}CA          & F                                          \\ \hline
% \text{ ABCDEFG } & $\emptyset$                   & \text{ BED{\color{red}G}FCA }& F                                          \\ \hline
% \end{tabular}
% \label{tb:SFS-CS-example}
% \end{table}

\subsection{Proofs of Properties of SFS-CS}
\label{sec:proofs}
In this section we prove Theorem~\ref{The:SF-OIR-I4EA}.

\subsubsection{OIR}
By Proposition~\ref{pro:cons}, it suffices to show that, if a player~$i$ is not the marginal player in the image ordering~$\pi'$ at a certain moment, then she does not become the marginal player in~$\pi'$ when a new player~$j$ joins.
Obviously, the only case that needs an argument is when $j$ is inserted into~$\pi'$, the cost of the new coalition is 1 and $j$ is not the marginal player in~$\pi'$; by definition of $\sfsshuf$, this happens only if $j$ cannot be made the marginal player no matter where she is inserted in~$\pi'$. 
For the sake of contradiction, suppose $i$ becomes the marginal player in~$\pi'$ after $j$ joins.  
This can happen only when $j$ is inserted before~$i$ in~$\pi'$.
Let $S$ be the prefix of~$\pi'$ up to~$i$ before $j$ joins. 
If $c(S) = 1$, then $c(S \setminus \{i\}) = 1$ (since $i$ was not the marginal player before $j$ joins), and $c(S \setminus \{i\} \cup \{j\}) \geq c(S \setminus \{i\}) = 1$, and so $i$ still cannot be the marginal player.
Therefore, $c(S) = 0$.  
If $i$ becomes the marginal player after $j$ joins, we must have $c(S \cup \{j\}) = 1$.  
But that means $j$ can be the marginal player if she is placed right after~$i$, a contradiction to the condition that $j$ cannot be a marginal player no matter where she is inserted.
This shows that $i$ cannot become the marginal player when a new player joins.

% and Lemma~\ref{Lem:gs-monotone}, SFS-CS is OIR.

% \begin{lemma}
    % $\sfsshuf$ is group size monotone.
    % \label{Lem:gs-monotone}
% \end{lemma}

% \begin{proof}
    % We prove it by contradiction. For any player $i$ who is not the marginal player in the image ordering $\pi'$ when she arrives, assume that when a new player $j$ joins, $i$ becomes the marginal player in $\pi'$. In this case, $j$ can be the marginal player in $\pi'$ if $j$ is inserted after $i$, which leads to a contradiction with the process of $\sfsshuf$. Hence, $\sfsshuf$ is group size monotone.
% \end{proof}

\subsubsection{SF}
By Proposition~\ref{pro:SF}, it suffices to show that $\mathtt{sfs}\text{-}\mathtt{shuf}$ is a bijection. 
We do this by constructing an inverse mapping for $\sfsshuf$.
Given an ordering $\pi' \in \Pi(S)$, we show that a unique $\pi \in \Pi(S)$ can be found such that $\sfsshuf(\pi) = \pi'$.
To this end, it suffices to show that we can uniquely identify the last player in~$\pi$, which allows us to iteratively reconstruct~$\pi$.

\begin{itemize}
\item If $c(S) = 0$, there cannot be a marginal player in~$\pi'$, in which case $\pi$ is simply the reverse of~$\pi'$.
\item If $c(S) = 1$, let $i$ be the marginal player in~$\pi'$.  
\begin{itemize}
\item If $c(\{i\}) = 1$, then 
\begin{enumerate}[(a)]
\item Either $i$ is the last comer in~$\pi$, in which case $i$ must be the first player in~$\pi'$;
% is the first player in~$\pi'$, in which case she is also the last player in~$\pi$;
\item Or $i$ is not the last comer in~$\pi$, but the marginal player in~$\pi'$ has not changed since $i$'s arrival in~$\pi$.  In this case, the players before~$i$ in~$\pi'$ are precisely the players arriving after~$i$ in~$\pi$, and their order in~$\pi'$ is precisely reverse to their order in~$\pi$.  
Hence the first player in~$\pi'$ is the last player in~$\pi$.
\end{enumerate}
\item If $c(\{i\}) \neq 1$, we consider the following two cases more carefully below:
 \begin{enumerate}[(a')]
     \item 
 Either $i$ is the last comer in~$\pi$;
 \item Or $i$ is not the last comer, but the marginal player in~$\pi'$ has not changed since $i$'s arrival in~$\pi$.
 \end{enumerate}
\end{itemize}
 \end{itemize}
In case (a) and (a'), by definition of~$\sfsshuf$, $i$ is in the earliest position in~$\pi'$ that makes her the marginal player.
In particular, in case~(a), $i$ has no predecessor in~$\pi'$, and in case~(a'), let $j$ be the predecessor of~$i$ in~$\pi'$, we must have $c(p(j, \pi')) = 0$, and for any $k \prec_{\pi'} j$, $c(p(k, \pi') \cup \{i\}) = 0$.

In case (b) and~(b'), let $T$ be the set of players arriving after~$i$ in~$\pi$, then players in~$T$ are inserted before~$i$ in~$\pi'$, in the order precisely reverse to $\pi_{|T}$.
In case~(b'), if we still let~$j$ be the predecessor of~$i$ in~$\pi'$ at the moment right after $i$'s arrival  in~$\pi$ (i.e., before the arrival of anyone in~$T$), then we have $c(p(j, \pi') \cup \{i\}) = 1$, and the person right after~$j$ in~$\pi'$ is the last one to arrive in~$\pi$.

To summarize, to distinguish case~(a') and~(b'), let $j \prec_{\pi'} i$ be the first player in~$\pi'$ with $c(p(j, \pi') \cup \{i\}) = 1$.
If $j$ is the predecessor of~$i$ in~$\pi'$, then $i$ is the last one to arrive in~$\pi$; otherwise, the player right after~$j$ is the last one to arrive in~$\pi$. Therefore, we can always uniquely identify the last player in~$\pi$, and iteratively reconstruct $\pi$ from~$\pi'$.
Example~\ref{eg:reconstruction} provides an illustration.

Two notions in the analysis of cases (a') and~(b') are used again in the next section.
We formally define them as follows.
% We give them names to avoid repetitions.
\begin{definition}
\label{Def:related_player}
    Given~$\pi'$ with the marginal player~$i$ and $c(\{i\}) = 0$, $i$'s predecessor in~$\pi'$ right after $i$'s insertion into~$\pi'$ is said to be $i$'s \textbf{related player}.
\end{definition}
In the proof above, $j$ is $i$'s related player and she can be identified as the first player in~$\pi'$ before~$i$ such that $c(p(j, \pi') \cup \{i\}) = 1$.

\begin{definition}
    The set of players arriving after the marginal player~$i$ in~$\pi$ is called the \textbf{late arriving set} in~$\pi'$, denoted as $\mathtt{LA}(\pi')$.
\end{definition}
In the proof above, the set~$T$ is the late arriving set. This set can be identified in~$\pi'$ as the set of players before $i$ if $c(\{i\}) = 1$, or, if $c(\{i\}) = 0$, as the set of players between $i$ and her related player~$j$. By this notation, we can simplify the reconstruction process by finding the players in late arriving set, reversing them, and putting them at the end of reconstruction order.
% \begin{itemize}
%   \item Given~$\pi'$ with the marginal player~$i$ and $c(\{i\}) = 0$, $i$'s predecessor in~$\pi'$ right after $i$'s insertion into~$\pi'$ (i.e., player~$j$ in the proof above) is said to be $i$'s \textbf{Related Player}.
% She can be identified as the first player in~$\pi'$ before~$i$ such that $c(p(j, \pi') \cup \{i\}) = 1$.
% \item The set of players arriving after the marginal player~$i$ in~$\pi$ (i.e., the set~$T$ in the proof above) is called the \textbf{Late Arriving Set} in~$\pi'$, which is denoted as $\mathtt{LA}(\pi')$.
%   This set can be identified in~$\pi'$ as the set of players before $i$ if $c(\{i\}) = 1$, or, if $c(\{i\}) = 0$, as the set of players between $i$ and her related player~$j$. By this notation, we can simplify the reconstruction process by finding the players in late arriving set, reversing them, and putting them at the end of reconstruction order.
%   \end{itemize}

\begin{figure}[htbp]
    \centering
    \includegraphics[scale=0.55]{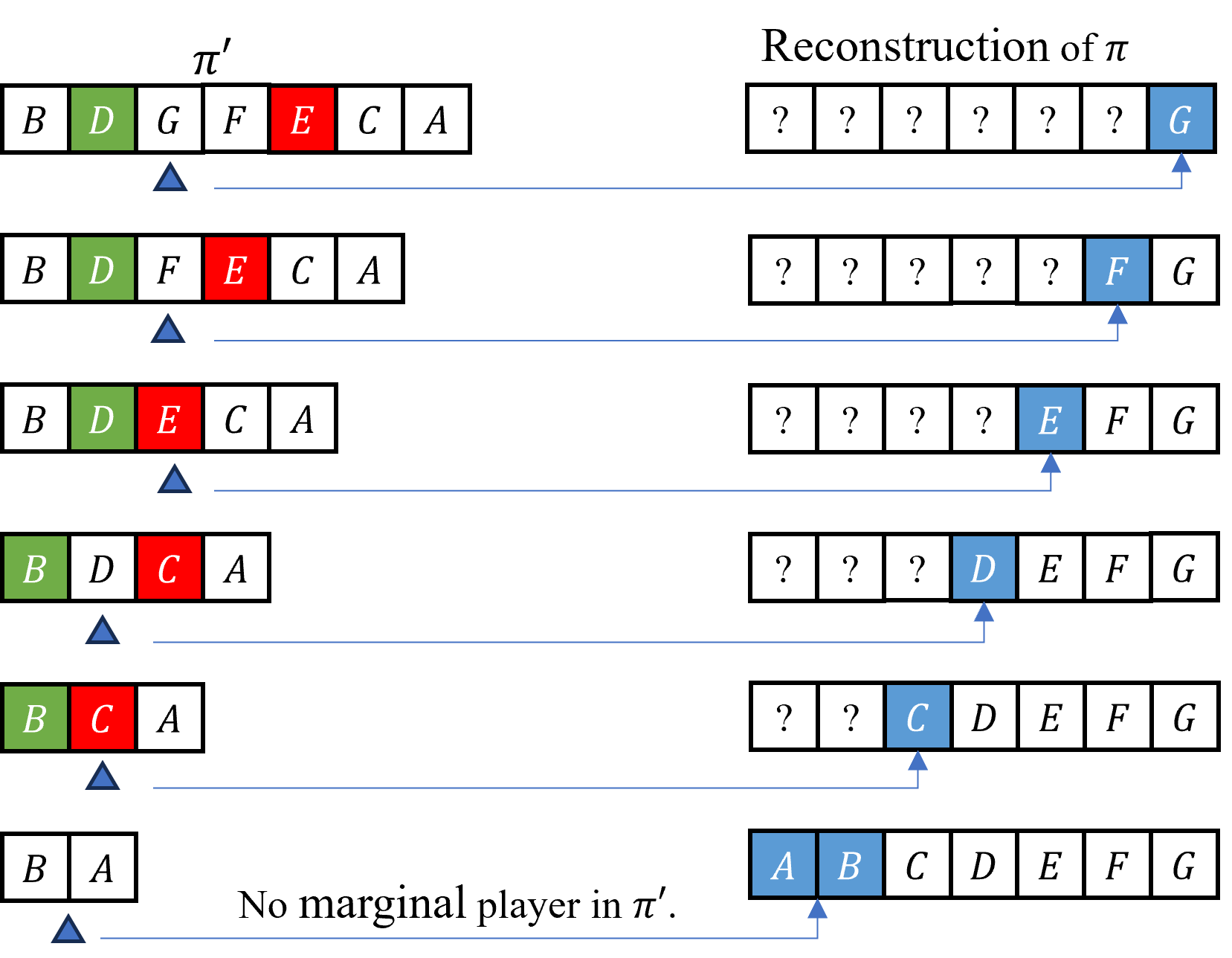}
    \caption{The reconstruction process of $\pi$ from $\pi'$. The left side marks the marginal players (red) and corresponding related players (green) through this process, and the right side shows the identified last players (blue) in each iteration. }
    \label{fig:RM-example}
    \Description{The reconstruction process of the order.}
\end{figure}

\begin{example}
\label{eg:reconstruction}
    Considering the game in Example~\ref{eg:sfsshuf}, with a final image ordering $\pi'=[B,D,G,F,E,C,A]$, we give the procedure to recover the original order (see Figure~\ref{fig:RM-example}). %we give the 0-1 valued monotonic cost sharing game with $N=\{A,B,C,D,E,F,G\}$. For any $T$, we have $c(T)=1$ if and only if $\exists S\subseteq T, S\in \{\{A,C\},\{B,C\},\{B,D,E\},\{E,F\}\}$. Consider the image ordering $\pi'=[B,D,G,F,E,C,A]$. 
    % Let $\pi''$ denote the reconstruction order in the process.
    Initially, %$\pi''=\emptyset$,
    the marginal player is $E$ and the related player is $D$, so that the player $G$ right after $D$ is the one who arrives last in $\pi$. Similarly, we can find the player $F$ is the one who arrives last among the remaining players. Next, the player right after the related player $D$ is player $E$ herself, so that $E$ is the last one arriving. Following steps are all similar and we can finally find the original order $\pi$. % and $\mathtt{LA}(\pi')=\{G,F\}$. The suborder $[G,F,E]$ is reversed and added to $\pi''$. $\pi'$ is updated to $[B,D,C,A]$. The marginal player is $C$ and the related player is $B$. Similarly, we get $\mathtt{LA}(\pi')=\{D\}$ and reverse $[D,C]$ and add it to $\pi''$. Finally, we reverse $[B,A]$ and add it to $\pi''$. It can be verified that the reconstruction order $\pi''=[A,B,C,D,E,F,G]=\pi$. % {\color{red}May need to be modified}
\end{example}

\subsubsection{I4EA}
Before we prove I4EA, we introduce two lemmas. Lemma~\ref{lem: I4EA1} states that if player $i$ is the marginal player of the image ordering when she joins, all the players after her in the original order are inserted before her in the image ordering. 

\begin{figure}[htb]
    \centering
    \includegraphics[scale=0.5]{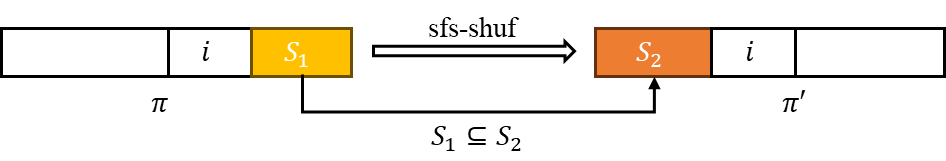}
    \caption{$\pi$ is the original order. $\pi'$ is the image ordering. $i$ is the marginal player of $\pi'$. Lemma~\ref{lem: I4EA1} states $S_1\subseteq S_2$.}
    \Description{A figure to explain Lemma 1.}
\end{figure}

\begin{lemma}
\label{lem: I4EA1}
    Given an order $\pi$ and image ordering $\pi'$ of $\sfsshuf$, suppose player $i$ is the marginal player of $\pi'_{|p(i,\pi)}$(i.e., the image ordering when $i$ just joins). $\forall j\in \pi$ satisfying $i\prec_{\pi} j$, we have $ j\prec_{\pi'}i$.
\end{lemma}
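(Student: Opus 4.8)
The plan is to track, stage by stage, the position of $i$ relative to the players arriving after her in $\pi$, exploiting the fact that $\sfsshuf$ builds the image ordering by insertion and that an insertion never disturbs the relative order of players already placed (part (b) of the shuffle rule). The one structural fact I would establish first is the following: once $i$ has become the marginal player at her own arrival, the prefix of the current image ordering ending at $i$ has cost exactly $1$ at every subsequent stage. This holds because that prefix already had cost $1$ when $i$ joined (she was the marginal player there), and every later insertion either places a player before $i$, enlarging this prefix, or places a player after $i$, leaving it unchanged; in both cases the prefix only grows as a set, so by monotonicity its cost stays $\geq 1$, hence equals $1$. Since the cost of prefixes is monotone non-decreasing along any ordering, this yields the key corollary: at every stage, every prefix of the image ordering with cost $0$ lies strictly before $i$.

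With this corollary in hand, I would prove the claim by induction on the players $j_1, j_2, \dots, j_m$ arriving after $i$ in $\pi$, listed in arrival order, showing that each $j_k$ is inserted before $i$ at the moment it is placed. There are two cases, matching the two cases in the construction of $\sfsshuf$. If $j_k$ can be made the marginal player, it is inserted immediately after the smallest cost-$0$ prefix $S^*$; by the corollary $S^*$ lies before $i$, so $j_k$ is placed before $i$. If $j_k$ cannot be made the marginal player, it is inserted immediately before its predecessor in $\pi$; that predecessor is $i$ herself when $k=1$, and is $j_{k-1}$ when $k>1$, which already lies before $i$ by the induction hypothesis. In both subcases $j_k$ lands before $i$. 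Finally, because subsequent insertions never reorder players already placed, the relation $j_k \prec i$ established at $j_k$'s insertion persists to the final image ordering, giving $j_k \prec_{\pi'} i$ for every such $j_k$.

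The routine parts are the two-case induction and the bookkeeping of relative-order preservation. The step I see as the crux, and the one that needs the most care, is the structural fact of the first paragraph: the observation that the prefix ending at $i$ can only grow and therefore retains cost $1$, together with its consequence that no cost-$0$ prefix ever extends past $i$. This is precisely what rules out $j_k$ being made a marginal player at a position after $i$ in Case~1, and it is the only place where the hypothesis that $i$ is the marginal player of $\pi'_{|p(i,\pi)}$ is genuinely used. I would invoke $0$-$1$-valuedness and monotonicity explicitly here, since the claim that the cost-$0$ prefixes form an initial segment not containing $i$ relies on both.
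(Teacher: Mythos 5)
Your proof is correct and follows essentially the same route as the paper's: an induction over the players arriving after $i$, split on the two insertion cases of $\sfsshuf$, with the key point being that once $i$ is marginal the prefix of the image ordering ending at her only grows and so retains cost $1$, so no later player can be made marginal at a position after her. Your explicit invariant (every cost-$0$ prefix lies strictly before $i$) is a cleaner packaging of the contradiction the paper derives in its Case~2, but the substance is the same.
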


\begin{proof}
Consider player $i$'s next player $j$, there are two cases.
\begin{itemize}
    \item \textbf{Case 1.} If $j$ cannot be the marginal player, she is inserted before her predecessor $i$ of $\pi$, i.e., $j\prec_{\pi'}i$. 
    \item \textbf{Case 2.} If $j$ can be the marginal player, we prove the statement by contradiction. Assuming that $i\prec_{\pi'}j$, we have $c(p(j,\pi'_{|p(j,\pi)}))\geq c(p(i,\pi'_{|p(i,\pi)})) = 1$ and $c(p(j,\pi'_{|p(j,\pi)})\setminus \{j\})\geq c(p(i,\pi'_{|p(i,\pi)})) =1$. Hence, 
    \begin{align*}
        \mathtt{MC}(j,c,p(j,\pi'_{|p(j,\pi)})) &= c(p(j,\pi'_{|p(j,\pi)}))-c(p(j,\pi'_{|p(j,\pi)})\setminus \{j\})\\ &=c(p(i,\pi'_{|p(i,\pi)}))-c(p(i,\pi'_{|p(i,\pi)}))\\ &=1-1 =0,
    \end{align*}
    which leads to a contradiction.
\end{itemize}
For $j$'s next player $k$, (i) if $j$ cannot be the marginal player, there are two cases.
\begin{itemize}
    \item \textbf{Case 1.} If $k$ cannot be the marginal player, she is inserted before her predecessor $j$ of $\pi$, i.e., $k\prec_{\pi'}j\prec_{\pi'}i$. 
    \item \textbf{Case 2.} If $k$ can be the marginal player, the analysis is the same as the Case 2 above so that $k\prec_{\pi'}i$.
\end{itemize}
(ii) If $j$ is the marginal player, the analysis of $k$ is the same as the analysis of $j$ above so that $k\prec_{\pi'}j\prec_{\pi'}i$.

For the following players, we can get the conclusion recursively.
\end{proof}

We can use Figure~\ref{fig:lem: I4EA2} to visually illustrate the insights of Lemma~\ref{lem: I4EA2}.
\begin{figure}[htb]
    \centering
    \includegraphics[scale=0.45]{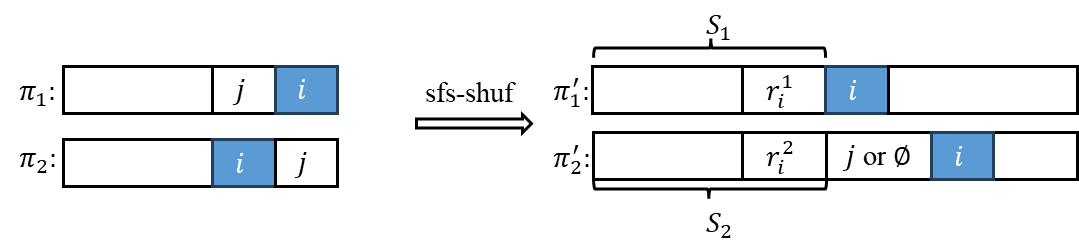}
    \caption{$\pi_1, \pi_2$ are the original orders, where only last two players flip. $\pi'_1$ and $\pi'_2$ are the corresponding image orderings. $i$ is the marginal player of $\pi'_1$ and $\pi'_2$. $r^1_i$ and $r^2_i$ are the corresponding related players. Lemma~\ref{lem: I4EA2} states $S_1 \setminus\{j\} \subseteq S_2$.}
    \label{fig:lem: I4EA2}
    \Description{A figure to explain Lemma 2.}
\end{figure}

\begin{lemma}
\label{lem: I4EA2}
    Given two orders $\pi_1=[\dots,j,i],\text{and } \pi_2=[\dots,i,j]$, where only adjacent $i$ and $j$ exchange. $\pi'_1$ and $\pi'_2$ are the corresponding image orderings of $\sfsshuf$. If player $i$ is the marginal player of both $\pi'_1$ and $\pi'_2$, and let $r^1_i$ and $r^2_i$ be the related players of $\pi'_1$ and $\pi'_2$, then we have $p(r^{1}_i,\pi'_1)\setminus\{j\}\subseteq p(r^{2}_i,\pi'_2)$.
\end{lemma}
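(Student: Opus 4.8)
The plan is to reduce the claim to a containment between the two prefixes of $\sfsshuf(\sigma)$ after which player~$i$ is inserted, where $\sigma$ denotes the common prefix order shared by $\pi_1$ and $\pi_2$ on $S_0 := N\setminus\{i,j\}$. Since $\sigma \sqsubseteq \pi_1$ and $\sigma\sqsubseteq\pi_2$, property~(b) of a shuffle rule gives $\sfsshuf(\pi_1)_{|S_0}=\sfsshuf(\pi_2)_{|S_0}=\sfsshuf(\sigma)$, so both image orderings are obtained from the common sub-image $\sfsshuf(\sigma)$ by inserting the last two arrivals, in opposite orders. Because $r^1_i$ and $r^2_i$ are related players, we work under $c(\{i\})=0$. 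As $i$ is the marginal player of both $\pi'_1$ and $\pi'_2$ and she is inserted last (in $\pi_1$) or second-to-last (in $\pi_2$), group size monotonicity forces her insertion to fall in Case~1 both times: $i$ is placed right after the \emph{smallest} prefix that makes her marginal. I will call these critical prefixes $S^\star_1 \sqsubseteq \sfsshuf([\sigma,j])$ and $S^\star_2 \sqsubseteq \sfsshuf(\sigma)$; each satisfies $c(S^\star)=0$ and $c(S^\star\cup\{i\})=1$, and the related player is the last element of the critical prefix, so $p(r^1_i,\pi'_1)=S^\star_1$.

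First I would pin down $j$'s position in $\pi_2$. Were $j$ (the last arrival in $\pi_2$) inserted in Case~1, she would become the marginal player of $\pi'_2$; since a $0$-$1$ valued cost function admits at most one marginal player in any ordering, this contradicts $i$ being marginal. Hence $j$ falls in Case~2 and is inserted immediately before her predecessor in $\pi_2$, namely $i$, i.e.\ between $r^2_i$ and $i$. Consequently $j$ lies after $r^2_i$ in $\pi'_2$, and $p(r^2_i,\pi'_2)=S^\star_2$ remains unchanged. The claim therefore reduces to $S^\star_1\setminus\{j\}\subseteq S^\star_2$.

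To compare the two sets, I would first note that $S^\star_1\setminus\{j\}$ is itself a prefix of $\sfsshuf(\sigma)$: restricting $\sfsshuf([\sigma,j])$ to $S_0$ returns $\sfsshuf(\sigma)$ by property~(b), and the restriction of a prefix is again a prefix. Thus $S^\star_1\setminus\{j\}$ and $S^\star_2$ are nested prefixes of $\sfsshuf(\sigma)$, and it suffices to rule out $S^\star_2\subsetneq S^\star_1\setminus\{j\}$. Assuming this strict containment, I would exhibit a prefix of $\sfsshuf([\sigma,j])$ that is a valid (cost~$0$, cost-with-$i$ equal to~$1$) insertion point for~$i$ yet strictly smaller than $S^\star_1$, contradicting the minimality defining $S^\star_1$. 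The natural candidate is $V:=S^\star_2\cup(\{j\}\cap S^\star_1)$: monotonicity gives $c(V)=0$ because $V\subseteq S^\star_1$, and $c(V\cup\{i\})=1$ because $V\supseteq S^\star_2$, while $|V|<|S^\star_1|$ since $S^\star_2\subsetneq S^\star_1\setminus\{j\}$.

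The main obstacle is precisely that $V$ must be an honest \emph{prefix} of $\sfsshuf([\sigma,j])$, which depends on where $j$ was inserted into $\sfsshuf(\sigma)$ (as a Case~1 marginal player, or in Case~2 before her predecessor in $\pi_1$) and on whether $j$ lies inside $S^\star_1$. I would split on this configuration: when $j\notin S^\star_1$, or when $j$'s insertion point lies weakly inside $S^\star_2$, the set $V$ is directly a prefix and the contradiction follows; in the one delicate configuration where $j\in S^\star_1$ but $S^\star_2$ sits strictly before $j$'s insertion point, $V$ may fail to be a prefix, but there $S^\star_2$ itself is already a prefix of $\sfsshuf([\sigma,j])$ that makes $i$ marginal and is strictly smaller than $S^\star_1$, again contradicting minimality. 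Carefully bookkeeping $j$'s position relative to $S^\star_1$ and $S^\star_2$ is the crux; once it is handled, the minimality of the earliest-marginal insertion rule closes the proof.
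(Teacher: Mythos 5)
Your proposal is correct and, at its core, takes the same route as the paper's proof: both hinge on the observation that $p(r^2_i,\pi'_2)$ (or its union with $\{j\}$) is a prefix of the image ordering into which $i$ is inserted under $\pi_1$, is a valid insertion point making $i$ marginal, and therefore bounds $p(r^1_i,\pi'_1)$ via the earliest-position rule of $\sfsshuf$. Your contradiction-based bookkeeping with the candidate prefix $V$ is in fact somewhat more careful than the paper's two-case argument, which in its first case tacitly relies on $c\bigl(p(r^2_i,\pi'_2)\cup\{j\}\bigr)=0$ without spelling out why this holds.
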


\begin{proof}
For player $j$, there are two cases.
\begin{itemize}
    \item \textbf{Case 1.} $j \prec_{\pi'_1} r^2_i$. For $\pi'_1$, when $i$ is inserted after $r^2_i$, $i$ is the marginal player. Hence, $i$ cannot be inserted in the later position and we can get $r^1_i\prec_{\pi'_1} r^2_i$ or $r^1_i=r^2_i$. Therefore, $p(r^{1}_i,\pi'_1) \setminus \{j\} \subseteq p(r^{2}_i,\pi'_2)$.
    \item \textbf{Case 2.} $r^2_i \prec_{\pi'_1} j$. When $i$ joins for $\pi_1$, $\sfsshuf$ must insert $i$ after $r^2_i$ such that $i$ is the marginal player. Hence, $r^1_i=r^2_i$ and thus $p(r^{1}_i,\pi'_1) \setminus \{j\} \subseteq p(r^{2}_i,\pi'_2)$. 
\end{itemize}
\end{proof}

Now we give the proof of I4EA with Lemma~\ref{lem: I4EA1}, Lemma~\ref{lem: I4EA2}, and Proposition~\ref{prop:flip}.

\begin{proof}[Proof of I4EA]
 Given  $\pi_1=[\dots,j,i,\dots], \pi_2=[\dots,i,j,\dots]$, where only adjacent $i$ and $j$ exchange. 
 
 \noindent (i) If $c(\{i\})=1$, $i$ is inserted at the head of the image orderings when she joins. According to Lemma~\ref{lem: I4EA1}, $p(i,\pi'_1)\subseteq p(i,\pi'_2)$. Hence, $c(p(i,\pi'_1)) = c(p(i,\pi'_2)) = 1$ and $c(p(i,\pi'_1)\setminus \{i\}) \leq c(p(i,\pi'_2)\setminus \{i\})$, which derives
 \begin{align*}
     \mathtt{MC}(i,c,p(i,\pi'_1)) &=c(p(i,\pi'_1))-c(p(i,\pi'_1)\setminus \{i\}) \\& \geq c(p(i,\pi'_2))-c(p(i,\pi'_2)\setminus \{i\}) =\mathtt{MC}(i,c,p(i,\pi'_2)),
 \end{align*}
i.e., if $i$ is the marginal player in $\pi'_2$, she is the marginal player in $\pi'_1$. 

% $\phi_i(N,c,\pi_1)\geq \phi_i(N,c,\pi_2)$.

\noindent (ii) If $c(\{i\})=0$, there are three cases for player $i$ in $\pi_1$.

\begin{itemize}
    \item \textbf{Case 1.} $i$ is the marginal player for $\pi'_1$. In this case, $\mathtt{sfs}\text{-}\mathtt{shuf}$ is flip monotone.
    \item \textbf{Case 2.} $i$ is the marginal player for $\pi'_{1|p(i,\pi_1)}$ but is not for $\pi'_1$. Let $S_1=\{k\mid k\in \pi_1, i\prec_{\pi_1} k\}$. For $k\in S_1$, we have $k\prec_{\pi'_1}i$ (according to Lemma~\ref{lem: I4EA1}). We then prove flip monotone in this case by contradiction. Assume that $\mathtt{sfs}\text{-}\mathtt{shuf}$ is not flip monotone, i.e., $i$ is the marginal player for $\pi_2'$ but is not the marginal player for $\pi_1'$. Similarly, let $S_2=\{k \mid k\in \pi_2, i\prec_{\pi_2} k\}$. For $k\in S_2$, we have $k\prec_{\pi'_2}i$ (according to Lemma~\ref{lem: I4EA1}). We can observe that $S_1 \cup \{j\} = S_2$. Let $r^1_i$ and $r^2_i$ denote the related players of $\pi'_{1|p(i,\pi_1)}$ and $\pi'_{2|p(j,\pi_2)}$, we have $p(r^{1}_i,\pi'_{1|p(i,\pi_1)}) \setminus \{j\} \subseteq p(r^{2}_i,\pi'_{2|p(j,\pi_2)})$ (according to Lemma~\ref{lem: I4EA2}). Hence, 
    \begin{align*}
        p(i,\pi_1')\setminus\{i\} &=S_1\cup p(r^{1}_i,\pi'_{1|p(i,\pi_1)})\\& \subseteq S_2\cup p(r^{2}_i,\pi'_{2|p(j,\pi_2)})\\ &= p(i,\pi_2')\setminus\{i\}.
    \end{align*}
    Since $i$ is not the marginal player for $\pi_1'$, $c(p(i,\pi_2')\setminus\{i\})\geq c(p(i,\pi_1')\setminus\{i\})=1$, i.e., $i$ is not the marginal player for $\pi_2'$, which leads to a contradiction.
    \item \textbf{Case 3.} $i$ is not the marginal player for $\pi'_{1|p(i,\pi_1)}$. There are two cases for player $j$.
    \begin{itemize}
        \item \textbf{Case 3.1.} $j$ is not the marginal player of $\pi'_{1|p(j,\pi_1)}$. Hence, for $\pi_2$, $i$ still cannot be the marginal player of $\pi'_{2|p(i,\pi_2)}$ without $j$'s participation, i.e., $\mathtt{sfs}\text{-}\mathtt{shuf}$ is flip monotone.
        \item \textbf{Case 3.2.} $j$ is the marginal player for $\pi'_{1|p(j,\pi_1)}$. We then prove flip monotone in this case by contradiction. Assume that $\mathtt{sfs}\text{-}\mathtt{shuf}$ is not flip monotone, i.e., $i$ is the marginal player for $\pi_2'$ but is not the marginal player for $\pi_1'$. Let $r_i^2$ denote the corresponding related player. Since $i$ is not the marginal player for $\pi'_{1|p(i,\pi_1)}$, we can get $j\prec_{\pi_1'} r_i^2$; otherwise, $i$ can be inserted after $r^2_i$ such that $i$ is the marginal player for $\pi'_{1|p(i,\pi_1)}$. Therefore, for $\pi_2$, $j$ can be inserted in the same location of $\pi_2'$ with that of $\pi_1'$ such that $j$ is the marginal player for $\pi'_{2|p(j,\pi_2)}$, i.e., $i$ is not the marginal player for $\pi_2'$, which leads to a contradiction.
    \end{itemize}
\end{itemize}

Taking all above together, we can conclude that SFS-CS is I4EA.

\end{proof}

\section{A Class of Shuffle-based Cost Sharing Mechanisms}
\label{sec:gsfs}
In this section, we propose a class of shuffle-based cost sharing mechanisms based on SFS-CS. Notice that the key point of SF is to ensure that the shuffle rule is a bijection. Recalling the process of $\mathtt{sfs}\text{-}\mathtt{shuf}$ in Example~\ref{eg: SFS and GSFS} below, we can see more possibilities. 

\begin{example}
\label{eg: SFS and GSFS}
Consider the 0-1 valued cost sharing game with $N=\{A,B,C,D,E\}$. For any $T$, we have $c(T)=1$ if and only if $\exists S\subseteq T, S\in \{\{A\},\{B,D\},\{C,E\}\}$. For the arrival order $\pi=[A,B,C,D,E]$, when player $C$ arrives, $\sfsshuf$ inserts her before her predecessor $B$. Actually, in this step, we can observe that inserting player $C$ between player $B$ and $A$ is also available. Intuitively, it will not affect how we find the late arriving set from the image ordering $\pi'$ and we can recover the order of the late arriving set by recording the insertion positions; hence, its properties will not be hurt. Interestingly, on the other hand, it may change the player who finally bears the cost; as in this example, player $E$ cannot be the marginal player in the new image orderings when she arrives. %Moreover, when $D$ joins, there are three possible positions. $\pi'_1$ is the image ordering of SFS-CS and $\pi'_2$ is one of the possible image orderings. We can observe that the marginal players of the two image orderings are different.
\end{example}
\begin{figure}[htbp]
\centering
    \includegraphics[scale=0.48]{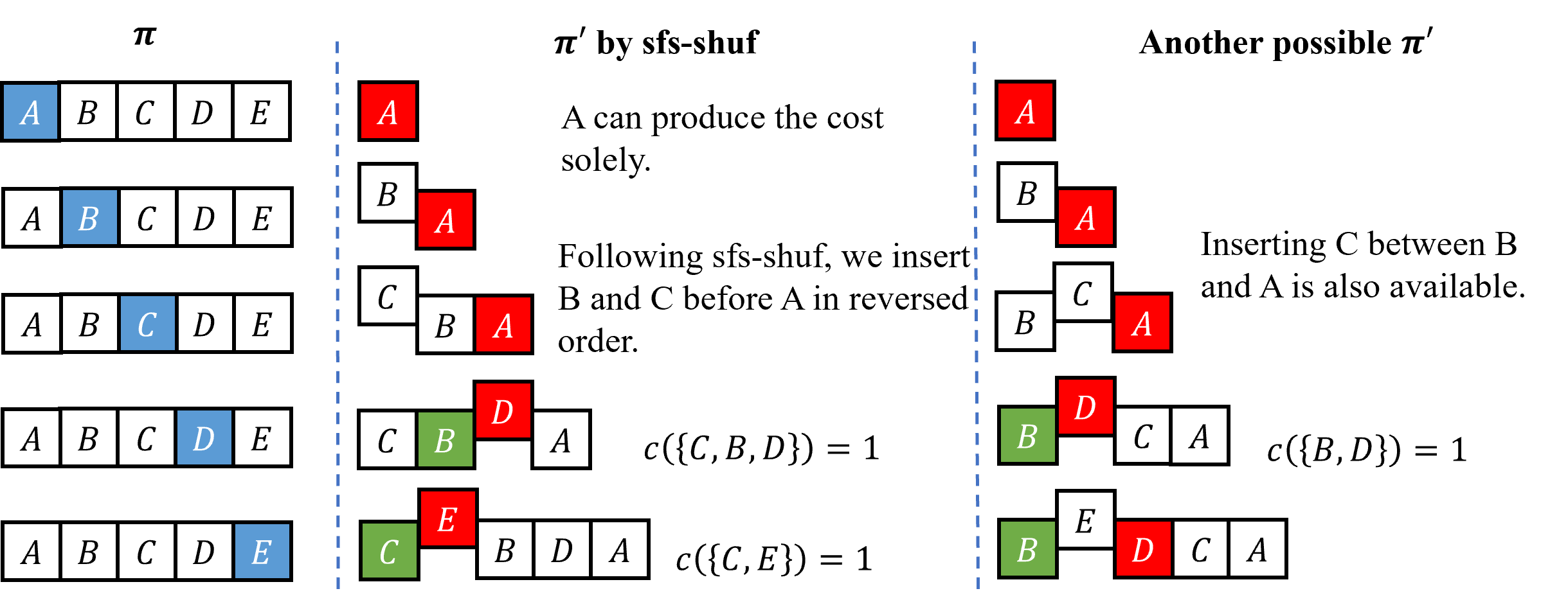}
    \caption{An example of $\sfsshuf$ and another possible mapping. The left side is the original order $\pi$ and for each joining player (colored by blue), the middle and right side show the image orderings given by these methods. The players colored by red are the marginal players, and the players colored by green are the corresponding related players.}
    %{\color{red}}To be modified }
    \label{fig:gSFS}
    \Description{A example of another mapping.}
\end{figure}

% By the given intuitions, the key is  how to rearrange the sequence of the following arrived players who cannot be new marginal players. $\sfsshuf$ produces a reverse order of their original order, while actually, any bijective rearrangement can yield a valid shuffle rule. Formally, we use the following coordinate functions for each player $i$ to decide those rearrangements after $i$ becomes the marginal player.

% By the given intuitions, the key is after a marginal player, how to rearrange the sequence of the following arrived players who cannot be new marginal players. $\sfsshuf$ produces a reverse order of their original order, while actually, any bijective rearrangement can yield a valid shuffle rule. Formally, we use the following coordinate functions for each player $i$ to decide those rearrangements after $i$ becomes the marginal player.

By the given intuitions, the key is how to decide the rearrangements of the following arrived players who cannot be new marginal players. For the case 2 of $\sfsshuf$, it produces a reverse order of their original order, while actually, any bijective rearrangement can yield a valid shuffle rule. Formally, we define the following coordinate functions to decide those rearrangements as follows.

% When player $i$ joins and she cannot be the marginal player of $\pi'$, she is inserted in $\pi'$ before her predecessor of $\pi$. In fact, the insertion position may not be unique. If there is no marginal player in the image ordering, $i$ can be inserted in any position.

\begin{definition}[Coordinate function]
     Given a player set $N$ and a player $i\in N$, a \textbf{coordinate function} of $i$ is a bijection function $\mathtt{cd}_i: \bigcup_{S \subseteq N\setminus\{i\}}\Pi(S)  \rightarrow \bigcup_{S \subseteq N\setminus\{i\}} \Pi(S)$, and for any $T \subseteq N\setminus\{i\}$, $\pi \in \Pi(T)$,  and $S \sqsubseteq \pi$, $\mathtt{cd}_i(\pi_{|S}) = (\mathtt{cd}_i(\pi))_{|S}$.
\end{definition}
When the marginal player $i$ of $\pi'$ exists, we use $\mathtt{cd}_i$ to decide the rearrangements of those players who cannot be new marginal players after $i$. Especially, When there is no marginal player, we use $\mathtt{cd}_0$ to represent the rearrangements before the first marginal player appears.

%to denote how we insert the player $i$ in this case, i.e., the relative position of the players. Otherwise, let $j$ denote the marginal player, $i$ can be inserted in any position such that $i\in \mathtt{LA}(\pi')$. We use $\mathtt{map}_i$ denote the relative position of $i$ in $\mathtt{LA}(\pi')$ for this case. 
% \begin{table}[]
% \centering
% \caption{$\pi'_1$ is the image ordering of $\sfsshuf$ and $\pi'_2$ is one of the possible image orderings. The players in red denote the marginal players in the image orderings.} 
% \begin{tabular}{|c|c|c|}
% \hline
% $\pi$ & $\pi'_1$ & $\pi'_2$ \\ \hline
% A      & A        & A       \\ \hline
% AB     & BA       & BA       \\ \hline
% ABC    & CBA      & BCA      \\ \hline
% ABCD   & DCBA     & BCDA     \\ \hline
% ABCDE  & DCB{\color{red}E}A    & B{\color{red}E}CDA    \\ \hline
% \text{ ABCDEF } & \text{ D{\color{red}F}CBEA }   & \text{ BF{\color{red}E}CDA }   \\ \hline
% \end{tabular}
% \label{tb:GSFS-CSs-example}
% \end{table}
Now we can choose different coordinate functions and get a class of shuffle rules described as follows, which extends the case 2 of $\sfsshuf$.

\begin{framed}
\label{def:gsfsshuf}
\noindent The $\gsfsshuf$ maps any order $\pi$ to an image ordering~$\pi'$ given by the following iterative procedure:
\begin{itemize}
\item The image ordering $\pi'$ is initialized to be the first player in~$\pi$.
% New players arriving according to~$\pi$ are iteratively inserted into~$\pi'$.
Let $i$ be the next player to arrive in~$\pi$.

\item \textbf{Case 1.} If $i$ can be inserted into~$\pi'$ so that $i$ becomes the marginal player in~$\pi'$, she is inserted into the earliest such position. 
% \item Insert $i$ to the earlist position that $i$ can be marginal player if possible.
% Formally, let $\mathcal P$ be the set $\{S \mid S \sqsubseteq \pi',  \MC(i, c, S \cup \{i\}) = 1\}$, then $\mathcal P \neq \emptyset$.  
% Let $S^*$ be the member in~$\mathcal P$ with the smallest cardinality.
% Update $\pi'$ so that $i$ is inserted after $S^*$.  
% (Note that $S^*$ may be the empty set, in which case $i$ becomes the first player in~$\pi'$.)

\item \textbf{Case 2.} If there is no way to insert~$i$ into~$\pi'$ to make her the marginal player, there are two cases for $\pi'$.
\begin{itemize}
    \item If there is no marginal player, update $\pi'$ so that $\pi'=\mathtt{cd}_0(\pi_{|p(i,\pi)})$.
    \item If the marginal player $j$ exists, update $\pi'$ so that $i\in \mathtt{LA}(\pi')$ and $\pi'_{|\mathtt{LA}(\pi')}=\mathtt{cd}_j(\pi_{|\mathtt{LA}(\pi')})$.
\end{itemize}
\end{itemize}
\end{framed}

% \begin{framed}
%  \noindent\textbf{A Class of Shuffle Rules: $\mathtt{gsfs}\text{-}\mathtt{shuf}$ }
 
%  \noindent\rule{\textwidth}{0.5pt}
 
%  \noindent\textbf{Input}: 0-1 monotonic cost sharing game $(N,c,\pi)$, map set $ \{\mathtt{map}_0\}\cup\{\mathtt{map}_i\}_{i\in N}$.
 
%  \noindent\rule{\textwidth}{0.5pt}
% $\pi'=[]$\\
% For player $i\in \pi$: 
%  \begin{enumerate}
%  \item $S=\{j|j\in \pi'\}$.
%  \item $\Pi'=\{\pi''\mid \pi''\in\Pi(\{S, i\}), \pi''\setminus\{i\}=\pi', \mathtt{MC}(i,c,p(i,\pi''))=1\}$, i.e., $\Pi'$ contains all the orders resulted by inserting $i$ in $\pi'$ such that $i$ is the marginal player.
%  \item If $\Pi'$ is empty, find the marginal player $j$ of $\pi'$. Insert player $i$ in $\pi'$ such that $i\in \mathtt{LA}(\pi')$ and $\pi'_{|\mathtt{LA}(\pi')}=\mathtt{map}_j(\pi_{|\mathtt{LA}(\pi')})$. If there is no marginal player, insert player $i$ in $\pi'$ such that $\pi'=\mathtt{map}_0(\pi)$.
%  \item Otherwise, update $\pi'=\underset{\pi''\in\Pi'}{\arg\min} |p(i,\pi'')|$.
%  \end{enumerate}
 
%  \noindent\rule{\textwidth}{0.5pt}
 
%  \noindent\textbf{Output}: the image ordering $\pi'$.
% \end{framed}
\begin{definition}
\label{def:GSFS-CS}
The \emph{generalized Shapley-fair shuffle cost sharing mechanism} (GSFS-CS) is the shuffle-based cost sharing mechanism given by the shuffle rule~$\gsfsshuf$.
\end{definition}

Now we can also get the reconstruction of $\mathtt{gsfs}\text{-}\mathtt{shuf}$. Compared to the reconstruction of $\mathtt{sfs}\text{-}\mathtt{shuf}$, the only difference is that %converses the suborder of the late arriving set and then updates $\pi'$. For $\mathtt{gsfs}\text{-}\mathtt{shuf}^{-1}$,
in each iteration, we use the inverse of $\mathtt{cd}_i$ to reposition the corresponding late arriving set. %, which is determined by the marginal player $i$.

% \begin{framed}
% \noindent The inverse $\gsfsshuf^{-1}$ maps any image ordering $\pi'$ to an original order~$\pi''$ given by the following iterative procedure:
% \begin{itemize}
% \item $\pi''$ is initialized to be empty;

% \item If the marginal player $i$ of $\pi'$ exists, find the late arriving set $\mathtt{LA}(\pi')$. Let $\Tilde{\pi}=\mathtt{cd}^{-1}_i(\pi'_{|\mathtt{LA}(\pi')})$, and update $\pi''=[i, \Tilde{\pi}, \pi'']$. Remove player $i$ and $\mathtt{LA}(\pi')$ in $\pi'$, and repeat this step.

% \item If the marginal player $i$ of $\pi'$ does not exist, let $\Tilde{\pi}=\mathtt{cd}^{-1}_0(\pi')$. Update $\pi''=[\Tilde{\pi}, \pi'']$.
% \end{itemize}
% \end{framed}

% \begin{framed}
%  \noindent\textbf{The Inverse of $\mathtt{gsfs}\text{-}\mathtt{shuf}$: $\mathtt{gsfs}\text{-}\mathtt{shuf}^{-1}$}
 
%  \noindent\rule{\textwidth}{0.5pt}
 
%  \noindent\textbf{Input}: 0-1 valued monotonic cost function $c$ and the image ordering $\pi'$.
 
%  \noindent\rule{\textwidth}{0.5pt}
% $\pi''=[], \pi_1=\pi'$.\\
% While the marginal player $i$ of $\pi_1$ exists: 
%  \begin{enumerate}
%  \item Find the the late arriving set $\mathtt{LA}(\pi')$.
%  \item Let $\pi_2=\mathtt{map}^{-1}_i(\pi_{1|\mathtt{LA}(\pi')})$.
%  \item Update $\pi''=[i, \pi_2, \pi]$. $S_1=\mathtt{LA}(\pi')\cup\{i\}, S_2=\{j|j\in\pi_1\}$. Update $\pi_1=\pi_{1\mid (S_2\setminus S_1)}$.
%  \end{enumerate}
% Let $\pi_2=\mathtt{map}^{-1}_0(\pi_1)$. Update $\pi''=[\pi_2, \pi'']$. \\
%  \noindent\rule{\textwidth}{0.5pt}
 
%  \noindent\textbf{Output}: the reconstructed order $\pi''$.
% \end{framed}

Since $\mathtt{cd}_0$ and $\mathtt{cd}_i$ are bijective, it can be easily verified that $\mathtt{gsfs}\text{-}\mathtt{shuf}$ is still bijective. Moreover, we can observe that $\mathtt{gsfs}\text{-}\mathtt{shuf}$ is still group size monotone, and I4EA, Lemma~\ref{lem: I4EA1} and Lemma~\ref{lem: I4EA2} still hold and I4EA can be obtained in the same approach; it can be directly validated by noticing that those proofs only require players who cannot be marginal player are inserted between the marginal player and the related player. Hence, GSFS-CS is SF, OIR, and I4EA.
\begin{theorem}
    For all 0-1 valued monotone cost sharing games, GSFS-CS is OIR, I4EA, and SF.
\end{theorem}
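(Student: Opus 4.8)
The plan is to invoke the Theorem stating that a shuffle-based cost sharing mechanism is SF, OIR, and I4EA whenever its shuffle rule is both group size monotone and flip monotone. Since GSFS-CS is by definition the shuffle-based mechanism induced by $\gsfsshuf$, it suffices to establish three things: that $\gsfsshuf$ is a genuine shuffle rule (in particular a bijection commuting with prefix-projection), that it is group size monotone, and that it is flip monotone.

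First I would verify that $\gsfsshuf$ is a valid shuffle rule. The prefix-commuting property holds essentially by construction: each step of the iterative procedure decides where to place the newcomer using only the current prefix of $\pi$ and $c$ restricted to that prefix, so projecting to any earlier prefix and then shuffling yields the same image as shuffling and then projecting. The nontrivial part is bijectivity on each $\Pi(S)$, and this is the main obstacle. I would establish it by exhibiting an inverse, mirroring the reconstruction used in the SF proof of SFS-CS. Given an image ordering $\pi'$, one identifies the marginal player $i$ and (when $c(\{i\}) = 0$) her related player, which together delimit the late arriving set $\mathtt{LA}(\pi')$ exactly as in the SFS-CS analysis. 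The only new ingredient is that the relative order of the late arriving players is no longer simply the reverse of their arrival order but their image under the bijection $\mathtt{cd}_i$ (or $\mathtt{cd}_0$ before any marginal player appears); since each coordinate function is itself a bijection commuting with projection, applying $\mathtt{cd}_i^{-1}$ to $\pi'_{|\mathtt{LA}(\pi')}$ recovers the genuine arrival order of the late arrivers, and hence the last player of $\pi$. Iterating this step reconstructs $\pi$ uniquely, so $\gsfsshuf$ is a bijection and SF follows from Proposition~\ref{pro:SF}.

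Next I would argue group size monotonicity, which by Proposition~\ref{pro:cons} gives OIR. The argument is identical to the OIR proof for SFS-CS: whether the newcomer is made the marginal player in Case~1, or is absorbed into the late arriving set in Case~2, is determined solely by whether some insertion position makes her the marginal player, and this decision is unaffected by which particular bijective rearrangement $\mathtt{cd}_i$ is used inside the late arriving set. Consequently, once a player fails to be the marginal player in the image ordering, she cannot become one as further players arrive, which is exactly group size monotonicity.

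Finally I would establish flip monotonicity, yielding I4EA via Proposition~\ref{prop:flip}. Here I would observe that Lemma~\ref{lem: I4EA1} and Lemma~\ref{lem: I4EA2} remain valid verbatim for $\gsfsshuf$, because their proofs only use that players who cannot be made marginal are inserted somewhere between the current marginal player and the related player, never the precise positions dictated by the reversal in $\sfsshuf$. With these two lemmas in hand, the case analysis of the SFS-CS I4EA proof carries over unchanged, since every inequality there is phrased in terms of prefix sets of the form $p(r_i,\pi')$ and the late arriving sets, rather than the internal ordering of those sets. Combining bijectivity, group size monotonicity, and flip monotonicity, the cited Theorem delivers that GSFS-CS is SF, OIR, and I4EA.
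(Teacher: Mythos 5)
Your proposal is correct and follows essentially the same route as the paper: bijectivity (hence SF) via the reconstruction argument with $\mathtt{cd}_i^{-1}$ applied to the late arriving set, group size monotonicity for OIR, and the observation that Lemmas~\ref{lem: I4EA1} and~\ref{lem: I4EA2} survive because they only use that non-marginal newcomers land between the marginal player and her related player. In fact your write-up is more explicit than the paper's own proof, which compresses all three points into a short paragraph of assertions.
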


% {\color{red}\begin{proof}[Sketch.]
% \textbf{OIR:} The class of shuffle rule is still consistent and hence GSFS-CS is OIR.

% \textbf{I4EA:} The idea is similar with Theorem~\ref{The:I4EA}. Lemma~\ref{lem: I4EA1} and Lemma~\ref{lem: I4EA2} still hold for any bijective shuffle rule $\mathtt{shuf}_i$. When player $i$ joins earlier, her cost share weakly decreases.

% \textbf{SF:} With similar analysis, we can prove $\mathtt{gsfs}\text{-}\mathtt{shuf}$ is bijective and hence GSFS-CS is SF.
% \qed
% \end{proof}}

\section{Extension to General Cost Functions}\label{sec:general}
\balance
% For a general monotone cost function $c$, if we can find a decomposition that gives a linear combination of 0-1 cost sharing games such that $c=\sum_{k}\mu_k g_k$, where $\{g_k\}$ are the 0-1 monotone cost sharing components and $\{\mu_k\}$ are the non-negative coefficients, then by the additivity of Shapley value, the sum of the solution by GSFS-CS on each component game $g_k$ is also a Shapley fair solution on $c$. Furthermore, if the decomposition is consistent in global and local games, such a decomposition way can be operated online. Then, we can get a valid online cost sharing mechanism for general monotone cost sharing games that is OIR, I4EA and SF. Actually, \cite{ge2024incentives} has proposed the greedy monotone decomposition (GM) that meets the requirements, and we can construct a corresponding extended GSFS-CS based on GM.

So far we have proposed a class of mechanisms satisfying all our requirements on 0-1 valued monotone cost sharing games. In this section, we show how GSFS-CS can be applied to general valued monotone cost sharing games. According to \cite{ge2024incentives}, an online value sharing mechanism on 0-1 valued monotone game can be extended to general valued setting while maintaining the properties by the following two steps: (1) decompose a general valued monotone function into positive-weighted 0-1 valued monotone components in an online fashion, and (2) run the mechanism simultaneously on each component game and determine each player's share as the weighted sum of her shares from those games.
We show that GSFS-CS can be extended to general valued setting through greedy-monotone decomposition (GM), which is an online decomposition algorithm proposed in [10] and meets the requirements. Furthermore, this extended mechanism can satisfy SF, OIR, and I4EA on any monotone cost sharing games.

\begin{lemma}
    Given $(N,c,\pi)$, the output $D(c)=\{(g_k,\mu_k)\}$ of GM-decomposition is a set of pairs where $c=\sum_{k}\mu_k g_k$. Note that $\{g_k\}$ are 0-1 valued monotone functions and $\{\mu_k\}$ are non-negative coefficients. 
\end{lemma}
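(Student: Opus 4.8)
The plan is to prove the lemma by recalling the explicit form of the GM-decomposition and then checking the three claimed properties (components that are $0$-$1$ valued and monotone, non-negative coefficients, and exact reconstruction) one at a time. The cleanest route is to realize the GM-decomposition as a threshold (level-set) decomposition of $c$: since $c$ takes finitely many values, I would list its distinct values as $0=v_0<v_1<\dots<v_m$, set $\mu_k := v_k-v_{k-1}$ for $1 \le k \le m$, and define each component by $g_k(S) := \mathbf{1}[\,c(S)\ge v_k\,]$. The greedy-monotone procedure of \cite{ge2024incentives} builds exactly such components incrementally as the cost is revealed along $\pi$, so it suffices to verify the properties of this decomposition and to argue that the online greedy choices accumulate to it on the full player set.

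First I would establish the two structural properties of the components. Each $g_k$ is $0$-$1$ valued by definition. Each $g_k$ is monotone because $c$ is: if $T\subseteq S$ then $c(T)\le c(S)$, so $c(T)\ge v_k$ forces $c(S)\ge v_k$, i.e.\ $g_k(T)=1 \Rightarrow g_k(S)=1$. The coefficients are non-negative (in fact strictly positive) because the $v_k$ are listed in strictly increasing order, whence $\mu_k=v_k-v_{k-1}>0$. Next I would verify the reconstruction identity. Fixing any $S\subseteq N$ and letting $j$ be the index with $c(S)=v_j$, the indicator $g_k(S)$ equals $1$ exactly for $k\le j$, so the weighted sum telescopes:
\begin{align*}
\sum_{k}\mu_k\,g_k(S)=\sum_{k=1}^{j}(v_k-v_{k-1})=v_j-v_0=c(S).
\end{align*}
Since $S$ was arbitrary, this gives $c=\sum_k \mu_k g_k$.

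The main obstacle is not the static identity above but the online/greedy aspect of GM: the procedure does not see all of $c$ at once, but reveals $c_{|S}$ as prefixes $S\sqsubseteq\pi$ arrive, and must grow the components and weights consistently so that restricting the decomposition to any prefix agrees with the decomposition of the restricted function. The key step there is to maintain, as an invariant of the greedy updates, that every partial component remains monotone and $0$-$1$ valued and every partial weight remains non-negative after each arrival; this is precisely the guarantee shown for GM-decomposition in \cite{ge2024incentives}, so I would invoke it and then observe that on the full set $N$ the accumulated components and weights coincide with the threshold decomposition above, which delivers both the claimed identity and the component properties. In short, the structural checks are immediate from the monotonicity of $c$, and the only delicate point is importing the online-consistency guarantee of the GM-decomposition from prior work rather than reproving it here.
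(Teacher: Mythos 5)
The paper does not actually prove this lemma: it is stated as an imported fact about the greedy-monotone (GM) decomposition of \cite{ge2024incentives}, with no argument given beyond the citation. Your proposal therefore goes further than the paper does, and the static verification you supply is sound: for the level-set components $g_k(S)=\mathbf{1}[c(S)\ge v_k]$ with $\mu_k=v_k-v_{k-1}$, the $0$-$1$ valuedness is definitional, monotonicity of each $g_k$ follows from monotonicity of $c$, positivity of the $\mu_k$ follows from the strict ordering of the values, and the telescoping sum recovers $c(S)$ exactly. The one point where you are on thinner ice is the identification of GM's output with this threshold decomposition. GM is an \emph{online} procedure that refines its components as new players reveal new values of $c$; in particular, a value gap $v_k-v_{k-1}$ that is ``discovered'' across several arrival stages can be emitted as several pairs whose indicator functions coincide but whose weights merely sum to $v_k-v_{k-1}$, so the output set need not literally equal the threshold decomposition even on the full player set. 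This does not threaten any of the three properties the lemma asserts (each such pair is still a $0$-$1$ monotone function with a non-negative weight, and the total sum is unchanged), but your closing claim that the accumulated output ``coincides'' with the threshold decomposition should be weakened to ``coincides after merging pairs with identical indicators,'' or else replaced by a direct appeal to the invariant you correctly identify as the content of the GM analysis in \cite{ge2024incentives}: every emitted component is $0$-$1$ valued and monotone, every emitted weight is non-negative, and the running weighted sum always equals the currently revealed restriction of $c$. With that adjustment your argument is a complete and self-contained justification of a statement the paper leaves to the reader.
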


% a methodological approach involves their decomposition into a weighted aggregation of 0-1 monotone cost functions. Subsequently, GSFS-CS is executed concurrently on each individual 0-1 cost component. The cost allocation for each participant is derived from the weighted aggregation of their respective cost shares obtained from the decomposed 0-1 cost sharing sub-games.

% For the decomposition algorithm,~\cite{ge2024incentives} proposed the greedy monotone decomposition (GM), which gives a non-negative linear combination of a monotone game as $c=\sum_{k}\mu_kg_k$, where $\{g_k\}$ are the 0-1 cost sharing components and $\{\mu_k\}$ are the coefficients. 
% They prove the GM-Decomposition is characterized by a positive linear combination of set functions, monotonicity of component functions, and consistent decomposition in global and local games, which is significant for extending GSFS-CS.

% Now we propose the extended GSFS-CS based on GM. The mechanism firstly does GM-decomposition on input cost function $c$. Then it calculates the cost share in each 0-1 cost monotone games by GSFS-CS and accumulates them with coefficients to be the cost share in $c$. The properties of GSFS-CS are maintained through this process.
\begin{definition}
    The \textbf{extended generalized Shapley-fair cost sharing mechanisms} (eGSFS-CS) is defined by $$\Bar{\phi}_i(S,c_{|S},\pi_{|S}) = \sum_{(g_k,\mu_k)\in D(c_{|S})}\mu_k\phi_i^{\mathtt{GSFS-CS}}(S,g_k,\pi_{|S})$$
    where $\phi_i^{\mathtt{GSFS-CS}}$ is the cost sharing policy of GSFS-CS. %and $D(\cdot)$ is the GM-decomposition.
\end{definition}

\begin{theorem}
    eGSFS-CS is SF, OIR, and I4EA.
\end{theorem}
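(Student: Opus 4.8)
The overall strategy is to push each of the three properties of $\bar\phi$ down to the corresponding property of GSFS-CS on each $0$-$1$ valued component of the GM-decomposition, using that $\bar\phi_i$ is a non-negative linear combination of per-component GSFS-CS shares and that, as the notation $D(c_{|S})$ indicates, the decomposition depends only on the (restricted) cost function. I would first record that $\bar\phi$ is a legitimate cost sharing policy: non-negativity is immediate since every $\mu_k \geq 0$ and every $\phi_i^{\mathtt{GSFS-CS}} \geq 0$, and budget balance follows because GSFS-CS is budget balanced on each component, giving $\sum_i \bar\phi_i(S, c_{|S}, \pi_{|S}) = \sum_k \mu_k g_k(S) = c_{|S}(S)$ by the decomposition lemma. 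For SF, I would combine the SF guarantee of GSFS-CS with linearity of the Shapley value. Since $D(c)$ is the same for every order $\pi \in \Pi(N)$, averaging commutes with the finite sum over components:
\begin{align*}
\frac{1}{|N|!}\sum_{\pi} \bar\phi_i(N, c, \pi) &= \sum_k \mu_k \left( \frac{1}{|N|!}\sum_{\pi} \phi_i^{\mathtt{GSFS-CS}}(N, g_k, \pi) \right) \\ &= \sum_k \mu_k \mathtt{SV}_i(g_k) = \mathtt{SV}_i(c),
\end{align*}
where the second equality is SF of GSFS-CS on each $g_k$ and the last is linearity of the Shapley value applied to $c = \sum_k \mu_k g_k$.

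For I4EA the argument is also clean. By Proposition~\ref{Pro:I4EA} it suffices to treat a single adjacent delay, i.e., orders $\pi_1 = [\dots, i, j, \dots]$ and $\pi_2 = [\dots, j, i, \dots]$ on the same game $(N, c)$. Since these two orders share the same cost function, they induce the same decomposition $D(c) = \{(g_k, \mu_k)\}$, so I may compare $\bar\phi_i$ term by term: GSFS-CS is I4EA on each $g_k$, giving $\phi_i^{\mathtt{GSFS-CS}}(N, g_k, \pi_1) \leq \phi_i^{\mathtt{GSFS-CS}}(N, g_k, \pi_2)$, and summing against the non-negative weights $\mu_k$ yields $\bar\phi_i(N, c, \pi_1) \leq \bar\phi_i(N, c, \pi_2)$. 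For OIR the situation is more delicate, because enlarging a prefix from $T$ to $S$ changes the cost function from $c_{|T}$ to $c_{|S}$ and hence changes the decomposition itself. Here I would invoke the online consistency of GM-decomposition established in~\cite{ge2024incentives}: each component of $D(c_{|T})$ extends, with unchanged weight, to a component of $D(c_{|S})$ whose restriction to $T$ recovers it. Per-component OIR of GSFS-CS then gives $\phi_i^{\mathtt{GSFS-CS}}(S, g_k, \pi_{|S}) \leq \phi_i^{\mathtt{GSFS-CS}}(T, g_k, \pi_{|T})$ for each component containing $i$, while components newly created at $S$ do not raise $i$'s charge; summing with the weights preserves the inequality and establishes OIR.

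The main obstacle is exactly this interface between the decomposition and the online structure in the OIR argument: the properties of GSFS-CS are stated for a fixed $0$-$1$ game, whereas the set of components and their supports evolve as the prefix grows. The crux is to verify, using the consistency guarantees of GM-decomposition from~\cite{ge2024incentives}, that extending the prefix refines the decomposition component-by-component with matching weights and that no newly appearing component raises the share of an already-present player; once that compatibility is in place, the per-component OIR inequalities add up without cross terms. SF and I4EA, by contrast, only use that $D$ depends on the cost function alone, and so reduce immediately to the linear-combination argument above.
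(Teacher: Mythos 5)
Your proposal is correct and follows essentially the same route as the paper: all three properties are reduced, via the non-negative linear combination over the GM-decomposition components, to the corresponding properties of GSFS-CS on each $0$-$1$ component, using linearity of the Shapley value for SF and a term-by-term comparison for I4EA. If anything, you are more explicit than the paper on the one genuinely delicate point, namely that OIR requires the online consistency of the GM-decomposition (components of $D(c_{|T})$ extend with unchanged weights into $D(c_{|S})$, and newly created components vanish on $T$ and hence charge nothing to already-present players), which the paper's displayed inequality uses silently.
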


\begin{proof}
    \textbf{SF}: Since the Shapley value satisfies \emph{additivity}, we have
    \begin{align*}
        \Bar{\phi}_i(N,c,\pi) = \sum_{(g_k,\mu_k)\in D(c)}\mu_k\phi_i^{\mathtt{GSFS-CS}}(N,g_k,\pi) \\ = \sum_{(g_k,\mu_k)\in D(c)}\mu_k \mathtt{SV}_i(g_k) =\mathtt{SV}_i(c).
    \end{align*}

    \textbf{OIR}: Given $\pi$, for any $T,S\sqsubseteq \pi$ with $T\subseteq S$, we have
    \begin{align*}
            \Bar{\phi_i}(S,c_{|S},\pi_{|S}) = \sum_{(g_k,\mu_k)\in D(c_{|S})}\mu_k \phi^{\mathtt{GSFS-CS}}_i(S,g_k,\pi_{|S}) \\ \leq \sum_{(g_k,\mu_k)\in D(c_{|T})}\mu_k \phi^{\mathtt{GSFS-CS}}_i(T,g_{k|T},\pi_{|T}) =  \Bar{\phi_i}(T,c_{|T},\pi_{|T}).   
    \end{align*}

    % \begin{align*}
    %     \Bar{\phi_i}(S,c_{|S},\pi_{|S})&= \sum_{(g_k,\mu_k)\in D(c_{|S})}\mu_k \phi^{\mathtt{GSFS-CS}}_i(S,g_k,\pi_{|S}) \\
    %                        &\leq \sum_{(g_k,\mu_k)\in D(c_{|T})}\mu_k \phi^{\mathtt{GSFS-CS}}_i(T,g_{k|T},\pi_{|T})  =  \Bar{\phi_i}(T,c_{|T},\pi_{|T}).                          
    % \end{align*}    
    
    \textbf{I4EA}:
    For $\pi_1 = [\dots,i,j,\dots]$ and $\pi_2 = [\dots,j,i,\dots]$, where only adjacent $i$ and $j$ exchange, we have 
    \begin{align*}
        \Bar{\phi_i}(N,c,\pi_1) = \sum_{(g_k,\mu_k)\in D(c)}\mu_k \cdot \phi^{\mathtt{GSFS-CS}}_i(N,g_k,\pi_1) \\  \leq \sum_{(g_k,\mu_k)\in D(c)}\mu_k \cdot \phi^{\mathtt{GSFS-CS}}_i(N,g_k,\pi_2) = \Bar{\phi_i}(N,c,\pi_2).
    \end{align*}
\end{proof}

\section{Discussion}
For the online 0-1 cost sharing problem, we formalize the shuffle-based cost sharing mechanism and propose SFS-CS mechanism, which satisfies OIR, SF, and I4EA. Based on this, we propose a class of mechanisms (GSFS-CS) satisfying the desirable properties. We also extend it to online general cost sharing problems by decomposing a general cost function into 0-1 cost functions.

Note that for value sharing games, not all value functions admit such online mechanisms. We discuss here on why cost sharing behaves so differently from value sharing. For 0-1 value sharing games, once the value of 1 is created, the mechanism decides whether to keep the value for the currently arrived player or transfer the value to the players who arrived earlier, which is a one-time allocation. However, for 0-1 cost sharing, when the cost is created, we can only allocate it to the currently arrived player (due to OIR). When more players arrive in the future, the mechanism decides whether to reallocate the cost to the new players, so the cost allocation keeps involving. Hence, the incentives for cost sharing and value sharing are fundamentally different. Figure \ref{Comparison} illustrates the difference of value sharing and cost sharing.

\begin{figure}[htb]
    \centering
    \includegraphics[width=8cm]{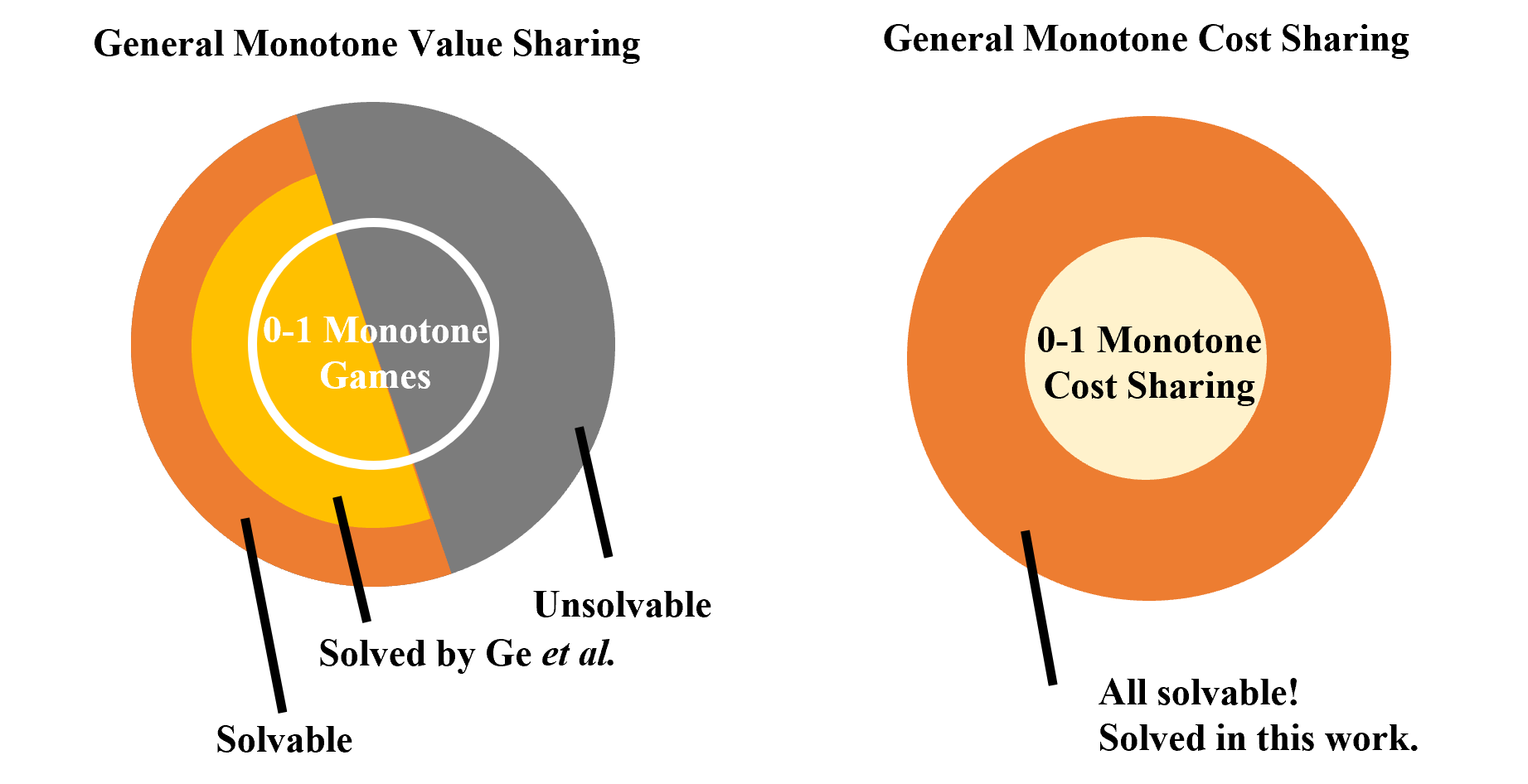}
    \caption{I4EA problems in value sharing and cost sharing. }
    \label{Comparison}
\end{figure}

There are several future directions worth investigation.
For 0-1 valued monotone cost sharing games, one may consider characterizing the whole set of mechanisms satisfying the desirable properties. For general monotone cost sharing games, since the time complexity of decomposition is exponential, one may consider designing polynomial time mechanisms.

\bibliographystyle{ACM-Reference-Format} 
\bibliography{sample}

%%%%%%%%%%%%%%%%%%%%%%%%%%%%%%%%%%%%%%%%%%%%%%%%%%%%%%%%%%%%%%%%%%%%%%%%

\end{document}